\newtheorem{theorem}{Theorem}
\newtheorem{lemma}{Lemma}
\newtheorem{corollary}{Corollary}
\newcommand{\Rey}{Re}
\newcommand{\bs}[1]{\boldsymbol{#1}}
\title{Asymptotic scaling laws for periodic turbulent boundary layers and their numerical simulation up to $\Rey_{\theta}=8300$}
\author{Andrew Wynn, Saeed Parvar, Joseph O'Connor \& Sylvain Laizet}
\begin{document}

\maketitle

\begin{abstract}
We provide a rigorous analysis of the self-similar solution of the temporal turbulent boundary layer, recently proposed in \cite{Biau2023}, in which a body force is used to maintain a statistically steady turbulent boundary layer with periodic boundary conditions in the streamwise direction. We derive explicit expressions for the forcing amplitudes which can maintain such flows, and identify those which can hold either the displacement thickness or the momentum thickness equal to unity. This opens the door to the first main result of the paper, which is to prove upper bounds on skin friction for the temporal turbulent boundary layer. We use the Constantin-Doering-Hopf bounding method to show, rigorously, that the skin friction coefficient for periodic turbulent boundary layer flows is bounded above by a uniform constant which decreases asymptotically with Reynolds number. This asymptotic behaviour is within a logarithmic correction of well-known empirical scaling laws for skin friction. This gives the first evidence, applicable at asymptotically high Reynolds numbers, to suggest that the self-similar solution of the temporal turbulent boundary layer exhibits statistical similarities with canonical, spatially evolving, boundary layers. Furthermore, we show how the identified forcing formula implies an alternative, and simpler, numerical implementation of periodic boundary layer flows. We give a detailed numerical study of this scheme presenting direct numerical simulations up to $\Rey_\theta = 2000$ and implicit large-eddy simulations up to $\Rey_\theta = 8300$, and show that these results compare well with data from canonical spatially evolving boundary layers at equivalent Reynolds numbers.  
\end{abstract}

\section{Introduction}

Spatially evolving boundary layers are one of the canonical flows in fluid mechanics. Their behaviour governs the aerodynamic efficiency of aircraft, road vehicles, ships and even wind turbines via the behaviour of the atmospheric boundary layer. For this reason, there is significant interest in understanding and predicting key performance statistics, such as skin friction, of boundary layer flows. Resolved numerical simulations are a powerful tool with which to make such predictions. However, boundary layer flows of practical importance are typically at high Reynolds number, the boundary layer is itself turbulent, and the streamwise growth of its thickness necessitates a large computational domain in order to obtain converged and accurate time-averaged statistics. This places a severe restriction on the complexity of turbulent boundary layers that can be accurately simulated numerically. Indeed,  direct numerical simulations (DNS) of a turbulent boundary layer at the highest momentum Reynolds number to-date \cite{Sillero2013}, at $\Rey_\theta \approx 6500$, is at least an order of magnitude below that of the boundary layer on the fuselage of a commercial airliner flying at cruise velocity.

Owing to the computational demands of simulating spatially-developing turbulent flows, there has been a growing interest in temporal-based solutions. This involves, where possible, reframing a spatially-developing problem into a temporally-developing one, thus enabling a homogeneous solution in the streamwise direction. This simplifies the streamwise boundary conditions, which now become periodic, and also permits a shorter domain in the streamwise direction, thus reducing the computational cost. Such an approach has been adopted for a variety of flow problems, including mixing layers \cite{Rogers1994a} and planar jets \cite{VanReeuwijk2014a}. For the case of the spatially-developing boundary layer, the temporal reformulation becomes equivalent to the Rayleigh problem, where an infinitely long plate is impulsively started at constant velocity.

The first detailed analysis, using DNS, of the temporal turbulent boundary layer as a counterpart to the spatially-developing version was presented in \cite{Kozul2016}. This analysis showed that the temporal approach is a good model for the spatially-developing solution and is therefore a useful tool in the study of such flows. Furthermore, \cite{Kozul2016}  argued via an analysis of similarity solutions that the two approaches should become asymptotically equivalent at high Reynolds numbers. However, one of the challenges with their approach is that the final boundary layer thickness for a given Reynolds number is not known {\it a priori}. Therefore, both the domain and mesh are over-sized/resolved for the majority of the simulation, thus increasing the computational cost. Another disadvantage is the limited time window where statistics can be collected for a given Reynolds number. This imposes the requirement of having to run an ensemble of simulations to obtain converged statistics, which further increases computational demand. These issued have been solved \cite{Topalian2017a} by adapting the pioneering slow-growth formulation of \cite{Spalart1986a} from the original spatial-homogenisation approach to a temporal-homogenisation more suited to the temporally-developing turbulent boundary layer. However, the lack of a clearly defined temporal thickness growth rate introduced ambiguities with regards to extension towards general boundary layers with a non-zero pressure gradient. More recently, \cite{Biau2023} extended the work of \cite{Topalian2017a} by combining the temporal slow-growth formulation with the assumption of self-similarity. In addition to this, the solution was also non-dimensionalised with respect to the momentum thickness, so that the temporal thickness growth can be calculated to ensure the momentum thickness remains equal to unity throughout the simulation. This temporal-homogenisation reformulation is especially efficient from a numerical perspective, since, in addition to permitting a shorter streamwise domain via periodicity, it is statistically stationary in time and homogeneous in the streamwise and spanwise directions. This allows efficient mesh design for the duration of the simulation {\it a priori}, as well as accelerated statistical convergence through temporal averaging and spatial averaging in both the streamwise and spanwise directions.

The approach taken by \cite{Biau2023}, which we study in this paper, is simple in that it only involves adding a single forcing term of the form 
\[
f(t) y \frac{\partial \bs{u}}{\partial y} 
\]
to the governing Navier-Stokes equations, where $y$ is the wall-normal co-ordinate, $\bs{u}$ is the fluid velocity, and $f(t)$ is the amplitude of the forcing. One way to view this extra term is that it redirects streamwise momentum towards the boundary at $y=0$, adding energy to the flow. This is necessary if one wants to use periodic boundary conditions in the streamwise direction since, for a boundary layer geometry, the assumption of periodicity causes artificial energy dissipation to the extent that the resulting flow would not be representative of any finite streamwise section of a canonical, spatially evolving, boundary layer. To achieve a non-trivial statistically stationary flow the forcing amplitude $f(t)$ must therefore be chosen carefully. In the numerical scheme proposed by \cite{Biau2023}, at each time-step the forcing amplitude is defined implicitly via solution of an optimisation problem in order to control the value of an integral measure of the boundary layer thickness (either the displacement thickness, or momentum thickness) to be equal to a chosen fixed value, typically unity.

Given this method of creating `boundary-layer-like' flows on a periodic domains, it is of interest to determine the extent to which the resulting flow resembles a finite section of a canonical, spatially-evolving, boundary layer. In this paper, we make two contributions towards resolving this question, addressing both theoretical and numerical comparisons of periodic and spatially evolving boundary layer flows. 

From the theoretical perspective, we derive rigorous bounds on the turbulent statistics of periodic boundary layer flows. Our main result is to show that an upper bound on the time-averaged skin friction coefficient of the form $C_f \leq \upsilon(\Rey)$ holds, where $\Rey$ is a Reynolds number based on the boundary layer thickness. The upper bound $\upsilon(\Rey)$ is shown to be monotonically decreasing in $\Rey$ and converges to a constant 
\[
\lim_{\Rey \rightarrow \infty} \upsilon(\Rey) = \frac{1}{2\sqrt{2}}.
\]
 While the particular value of this constant is not important, we show that it is consistent (i.e. higher than) the skin friction values observed in numerical simulations. Of greater interest is that a uniform bound on skin friction is within a logarithmic correction of the well-known empirical frictions laws \cite[p.~577]{schlichting_2017} of the form
\[
C_f \sim \frac{1}{(\log{\text{Re}})^2}, 
\]
which are observed to hold for spatially evolving boundary layers for $\text{Re} \gg 1$. We show further that a logarithmic improvement to the uniform bound $C_f \leq \mathcal{O}(1)$ must hold if a weak assumption is made that the flow's turbulent kinetic energy grows logarithmically with $\text{\Rey}$. These are the first known bounds of this type proven to-date for boundary layer flows, with perhaps the most similar result in the literature being the bound on drag coefficient for a finite-length flat plate given by \cite{Kumar_Garaud_2020}. 

A key step towards proving the above theoretical bound is to alter Biau's numerical scheme for periodic turbulent boundary layers to remove the implicit definition of the forcing amplitude $f(t)$. We will show that an explicit formula for the forcing amplitude can be derived which enforces, asymptotically, a constant boundary layer thickness, and we show that this explicit formula agrees very accurately with the asymptotic forcing values arising from a numerical implementation of Biau's scheme. From the perspective of analysis, this gives a PDE which is amenable to applying the Contantin-Doering-Hopf upper bounding theory. However, from a numerical perspective it also opens the door to implementing an alternative numerical scheme to that presented in \cite{Biau2023}. 

In the second main contribution of the paper, we show how the introduced numerical approach can be used in both DNS and implicit large-eddy simulation (ILES) to simulate turbulent boundary layers on periodic domains, at a much lower cost than spatially evolving boundary layers. We present a detailed analysis of the turbulent statistics of these flows for DNS up to $\Rey_\theta =2000$ and ILES up to $\Rey_\theta = 8300$, first validating the new numerical scheme against the results of \cite{Biau2023} for periodic turbulent boundary layers, and then comparing our results with the existing DNS of \cite{Sillero2013} and \cite{Schlatter2010}, and with the ILES of \cite{Eitel-Amor2014}, for canonical, spatially evolving, boundary layers.

\subsection{Problem Setup}

Suppose that a fluid with velocity $\bs{u} = u\bs{e}_x + v \bs{e}_y + w \bs{e}_z$ occupies a semi-infinite rectangular domain $\Omega = [0,L_x] \times [0,\infty) \times [0,L_z] \subset \mathbb{R}^3$, and is confined by an impermeable wall at $y=0$ where no-slip boundary conditions are imposed. The flow is driven by a free-stream velocity $U_\infty \bs{e}_x$ infinitely far from the wall, and periodic boundary conditions are assumed in the streamwise $\bs{e}_x$ and spanwise  $\bs{e}_{z}$ directions. Using $U_\infty$ as a velocity scale and unity as a length-scale, the non-dimensional Navier-Stokes equations for the flow are
\begin{equation} \label{eq:NSE}
\begin{split}
\frac{\partial \bs{u}}{\partial t}+(\bs{u} \cdot \nabla )\bs{u} + \nabla p  & = \frac{1}{\Rey}\Delta \bs{u} + \bs{f}  \\
\nabla \cdot \bs{u} &=0 
\end{split}
\end{equation}
and, following \cite{Biau2023}, use the boundary conditions
\begin{equation} \label{eq:bcs}
\begin{split}
\lim_{y \rightarrow \infty} \bs{u}(x,y,z,t) &= 1\cdot \bs{e}_x,\\
\bs{u}(x,0,z,t)&=\bs{0},\\
\bs{u} \sim \text{periodic in the} &\; \bs{e}_x \, \text{and} \, \bs{e}_z \; \text{directions}. 
\end{split}
\end{equation}
Here $\Rey = \frac{U_\infty}{\nu}$ is the Reynolds number, $\bs{f}$ is a non-dimensional body force, and $\nu$ is the kinematic viscosity.  The unusual choice of using unity as a length scale is motivated by the fact that the body force will be subsequently chosen to control the boundary layer thickness to unity, thus making this an appropriate spatial length scale for the problem. 

To describe the boundary layer thickness on the periodic domain $\Omega$, let the streamwise and spanwise averaged $\bs{e}_x$ velocity be given by  
\[
U(y,t):=\langle u \rangle :=\frac{1}{L_x L_z} \int_0^{L_x} \int_0^{L_z} u(x,y,z,t) dx dz, \qquad y,t \geq 0,
\]
and define the instantaneous displacement thickness, $\delta^{*}(t)$, and momentum thickness, $\theta(t)$, by
\[
\delta^{*}(t) = \int_0^\infty (1- U(y,t) ) \, dy \qquad \text{and} \qquad \theta(t) := \int_0^\infty U(y,t)(1-U(y,t)) \,dy.
\]
In order to control either $\delta^\ast$ or $\theta$ to be unity, throughout this paper we use the approach of \cite{Biau2023} and let the body force  be given by 
\begin{equation} \label{eq:body_force}
\bs{f}(t) = f(t) y \frac{\partial \bs{u}}{\partial y},
\end{equation}
where $f(t) \in \mathbb{R}$ is the forcing amplitude.

\section{Boundary layer thickness control} \label{sec:control}

The aim of this section is to derive explicit expressions for the forcing amplitude $f(t)$ in \eqref{eq:body_force} under which either of the boundary layer thicknesses $\delta^\ast(t)\equiv 1$ or $\theta(t) \equiv 1$ can be maintained for solutions to \eqref{eq:NSE} which satisfy the periodic boundary conditions \eqref{eq:bcs}.  Figure \ref{fig:bl_control_blocks} shows a schematic overview of how such a forcing $f(t)$ is defined. It consists of two components,
\[
f = K( e) + F( U , \langle uv \rangle),
\]
where $K$ and $F$ are nonlinear functionals, both of which will be defined subsequently. 

The first term, $K$, is an error feedback controller that uses the deviation $e=e_\delta=1-\delta^\ast$ (or $e = e_\theta = 1- \theta$) from the desired unity value of boundary layer thickness. The second term, $F$, which depends only on the mean streamwise velocity $U$ and the stresses $\langle uv\rangle$, is a feed-forward control term that will be constructed from analysis of the governing equations. Importantly, this will allow us to identify an explicit formula for the forcing $f$ once the flow is in a statistically steady state.   

\begin{figure}
\centering
\includegraphics[width=0.75\linewidth, trim={0.cm 0cm 0.cm 0.0cm},clip]{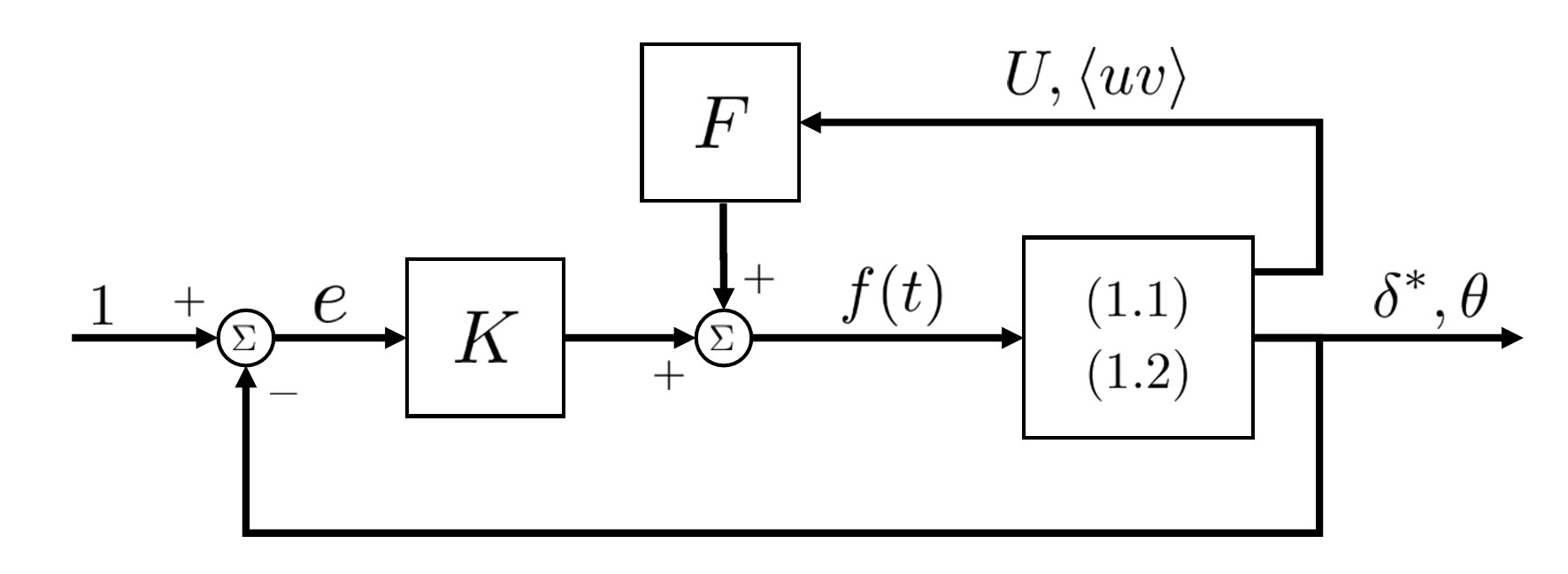}
\caption{A schematic overview of the proposed boundary layer thickness control scheme. The idea is to force either $\delta^\ast$ or $\theta$ to converge to a reference value of $1$, by choice of the nonlinear control laws $K$ and $F$.  }
\label{fig:bl_control_blocks}
\end{figure}

\subsection{Displacement thickness control}

To determine appropriate choices of the controllers $K$ and $F$, first let $e_\delta(t) = 1-\delta^\ast(t)$. It is shown in the Appendix that
\begin{equation} \label{eq:displacement_evolution}
\frac{d e_\delta}{dt} = f(t)\delta^\ast(t) - \frac{1}{2} C_f(t),
\end{equation}
where we define the instantaneous streamwise- and spanwise-averaged skin friction by 
\[
C_f(t) := \frac{2}{\Rey} \frac{\partial U}{\partial y}(0,t), \qquad t \geq 0.
\]

Given \eqref{eq:displacement_evolution}, it is not difficult to see that asymptotic decay, i.e. $e_\delta(t) \rightarrow 0$, of the error can be ensured by defining the feedback and feed-forward components of the forcing amplitude $f(t)$ by
\[
K(e) = -\frac{ke}{1-e} = - \frac{ke_\delta}{\delta^\ast}, 
\]
and
\[
F(U) = \frac{\frac{ \partial U}{\partial y}(0,t)}{\Rey \int_0^\infty (1-U) dy} = \frac{C_f}{2\delta^\ast},
\]
respectively, where $k>0$ is a constant gain parameter. This observation gives the following result.

\begin{lemma} \label{lem:displacement_control}
Let $e_\delta = 1-\delta^\ast$. Suppose that the forcing amplitude \eqref{eq:body_force} is chosen such that 
\begin{equation} \label{eq:displacement_forcing}
f(t) := \frac{-k e_\delta(t) + \frac12  C_f(t)}{\delta^{*}(t)},
\end{equation}
for some $k >0$. Then, for any solution to \eqref{eq:NSE} with boundary conditions \eqref{eq:bcs} the displacement thickness satisfies $\delta^{*}(t) \rightarrow 1$ as $t \rightarrow \infty$.
\end{lemma}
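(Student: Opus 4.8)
The plan is to substitute the proposed forcing amplitude \eqref{eq:displacement_forcing} directly into the evolution equation \eqref{eq:displacement_evolution} for the error $e_\delta$. Since \eqref{eq:displacement_evolution} states $\frac{d e_\delta}{dt} = f(t)\delta^\ast(t) - \frac12 C_f(t)$, plugging in $f(t)\delta^\ast(t) = -k e_\delta(t) + \frac12 C_f(t)$ causes the skin friction terms to cancel exactly, leaving the linear scalar ODE $\frac{d e_\delta}{dt} = -k e_\delta(t)$. First I would note that the forcing in \eqref{eq:displacement_forcing} is well-defined as long as $\delta^\ast(t) \neq 0$; this is where a little care is needed, since the formula divides by $\delta^\ast(t) = 1 - e_\delta(t)$.

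Assuming $\delta^\ast(t)$ stays positive, the ODE $\dot e_\delta = -k e_\delta$ integrates immediately to $e_\delta(t) = e_\delta(0)\, e^{-kt}$, so $e_\delta(t)\to 0$ exponentially and hence $\delta^\ast(t) = 1 - e_\delta(t) \to 1$ as $t\to\infty$, which is the claimed conclusion. The rate of decay is governed by the gain $k>0$. One should also observe that the argument is essentially independent of the detailed structure of the Navier--Stokes dynamics: all the fluid-mechanical content has been absorbed into the identity \eqref{eq:displacement_evolution} (derived in the Appendix), and what remains is an elementary stability argument for a scalar linear ODE obtained by design of the feedback law $K$ and the feed-forward law $F$.

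The main obstacle — really the only subtlety — is the well-posedness issue hidden in the division by $\delta^\ast(t)$ in \eqref{eq:displacement_forcing}. If $\delta^\ast(0) > 0$, i.e. $e_\delta(0) < 1$, then since the closed-loop equation forces $|e_\delta(t)|$ to be monotonically non-increasing (indeed $\tfrac{d}{dt} e_\delta^2 = -2k e_\delta^2 \le 0$), we have $e_\delta(t) \le e_\delta(0) < 1$ for all $t \ge 0$, so $\delta^\ast(t) = 1 - e_\delta(t)$ remains bounded away from zero and the forcing amplitude stays finite; thus the feedback loop is self-consistent. If instead $\delta^\ast(0) \le 0$ one would need a short separate remark, but for any physically sensible initial condition (a genuine boundary-layer profile with $0 < \delta^\ast(0)$, and in practice $\delta^\ast(0)$ of order unity) this is automatic. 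I would state this boundedness observation explicitly, then conclude with the one-line integration of $\dot e_\delta = -k e_\delta$.
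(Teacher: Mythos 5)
Your proposal is correct and follows essentially the same route as the paper: substitute \eqref{eq:displacement_forcing} into the error evolution equation so that the skin-friction terms cancel, leaving $\dot e_\delta = -k e_\delta$ and hence exponential convergence $\delta^\ast(t)\to 1$. Your additional remark that $|e_\delta|$ is non-increasing, so $\delta^\ast(t)=1-e_\delta(t)$ stays bounded away from zero and the division in \eqref{eq:displacement_forcing} remains well-defined, is a worthwhile point the paper's proof leaves implicit.
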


An important corollary of Lemma \ref{lem:displacement_control} is that 
\begin{equation} \label{eq:displacement_f_limit}
\lim_{t\rightarrow \infty} \left|f(t) - \frac12 C_f(t) \right| =0,
\end{equation}
which implies that once the flow has reached a statistically-steady state it must satisfy the PDE 
\begin{equation} \label{eq:NSE_d1}
\begin{split}
\frac{\partial \bs{u}}{\partial t}+(\bs{u} \cdot \nabla )\bs{u} +  \nabla p  & = \frac{1}{\Rey} \left( \Delta \bs{u} + y \frac{\partial \bs{u}}{\partial y}\frac{\partial U}{\partial y}(0,t) \right),  \\
\nabla \cdot \bs{u} &=0,
\end{split}
\end{equation}
subject to the boundary conditions \eqref{eq:bcs}. 

The fact that \eqref{eq:NSE_d1} is an explicit form of the governing equations for a periodic boundary layer flow is of significant use for analysis. We will exploit this later in \S \ref{sec:bounds} to derive rigorous estimates of the skin friction for solutions to \eqref{eq:NSE_d1}. In contrast, the numerical scheme first proposed in \cite{Biau2023} defines the forcing $f(t)$ {\em implicitly}, via the solution to an optimisation problem, and this  makes a careful analysis of the governing equations more challenging. Nonetheless, we will show in \S \ref{sec:numerical_results} that a numerical implementation of \eqref{eq:NSE} with $f(t)$ defined by either \eqref{eq:displacement_forcing}, or by using Biau's implicit method, give rise to the same flows.

\subsection{Momentum thickness control} \label{sec:control_theta}

Momentum thickness can be controlled in an analogous manner to the displacement thickness. Letting $e_\theta(t) = 1-\theta(t)$, it is shown in the Appendix that 
\begin{equation} \label{eq:momentum_ode}
\frac{de_\theta}{dt} = f(t) \theta(t) + \frac12 C_f(t) - \frac{2}{\Rey} \int_0^\infty \left( \frac{ \partial U}{\partial y}\right)^2 dy + 2 \int_0^\infty \langle uv \rangle \frac{\partial U}{\partial y} \, dy.
\end{equation}
In this case,  the error feedback is defined by 
\[
K(e_\theta) = \frac{-ke_\theta}{1-e_\theta} = \frac{-ke_\theta}{\theta}
\]
and the feed-forward term given by
\[
F(U,\langle uv \rangle) = \frac{ \frac{1}{\Rey} \frac{\partial U}{\partial y}(0) - \frac{2}{\Rey} \int_0^\infty \left( \frac{ \partial U}{\partial y}\right)^2 dy + 2 \int_0^\infty \langle uv \rangle \frac{\partial U}{\partial y} \, dy}{ \int_0^\infty U(1- U) dy}
\]
Letting $f(t) = K(e) + F(U,\langle uv \rangle)$ and substituting into \eqref{eq:momentum_ode} then gives the following result.

\begin{lemma} \label{thm:momentum_control}
Let $e_\theta(t) = 1-\theta(t)$. Suppose that the forcing amplitude \eqref{eq:body_force} is chosen such that 
\begin{equation} \label{eq:momentum_control}
f(t) := \frac{-ke_\theta(t)   - \frac12 C_f(t) + \frac{2}{\Rey} \int_0^\infty \left( \frac{ \partial U}{\partial y}\right)^2 dy - 2 \int_0^\infty \langle uv \rangle \frac{\partial U}{\partial y} \, dy}{\theta(t)},
\end{equation}
for some constant $k>0$. Then for any solution to \eqref{eq:NSE} satisfying the boundary conditions \eqref{eq:bcs}, the momentum thickness satisfies $\theta(t) \rightarrow 1$ as $t \rightarrow \infty$.
\end{lemma}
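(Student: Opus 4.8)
The plan is to substitute the prescribed forcing \eqref{eq:momentum_control} directly into the evolution equation \eqref{eq:momentum_ode} for $e_\theta$ and to observe that, by construction, every flow-dependent term on the right-hand side cancels, leaving the scalar linear ODE $\dot e_\theta = -k e_\theta$. Concretely, along any solution for which the forcing is well-defined (in particular $\theta(t)>0$, so the denominator in \eqref{eq:momentum_control} does not vanish), multiplying \eqref{eq:momentum_control} through by $\theta(t)$ gives
\begin{equation*}
f(t)\theta(t) = -k e_\theta(t) - \tfrac12 C_f(t) + \frac{2}{\Rey}\int_0^\infty \left(\frac{\partial U}{\partial y}\right)^2 dy - 2\int_0^\infty \langle uv\rangle \frac{\partial U}{\partial y}\, dy .
\end{equation*}
Inserting this into \eqref{eq:momentum_ode}, the skin-friction term $\tfrac12 C_f(t)$, the viscous integral $\tfrac{2}{\Rey}\int_0^\infty (\partial_y U)^2\,dy$, and the Reynolds-stress integral $2\int_0^\infty \langle uv\rangle \partial_y U\,dy$ each appear with equal and opposite sign and cancel, so that $\dfrac{d e_\theta}{dt} = -k e_\theta$.

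The remaining step is elementary: integrating the linear ODE yields $e_\theta(t) = e_\theta(0)\, e^{-kt}$, hence $e_\theta(t) \to 0$, i.e. $\theta(t) \to 1$, as $t\to\infty$ for every $k>0$. I would additionally note that this explicit solution closes the consistency loop needed to justify the standing assumption $\theta(t)>0$: provided $\theta(0)>0$ (equivalently $e_\theta(0)<1$), the formula $\theta(t) = 1 - e_\theta(0)\,e^{-kt}$ shows $\theta(t)$ stays strictly positive for all $t\geq 0$, so the denominator in \eqref{eq:momentum_control} never vanishes and the forcing is well-defined along the entire trajectory.

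As with Lemma \ref{lem:displacement_control}, the only genuine content beyond this algebraic cancellation is the derivation of \eqref{eq:momentum_ode} itself, carried out in the Appendix by differentiating $\theta(t) = \int_0^\infty U(1-U)\,dy$ in time, inserting the streamwise- and spanwise-averaged momentum equation, and integrating by parts using \eqref{eq:bcs}; this needs only that $U(\cdot,t)\to 1$ and $\partial_y U(\cdot,t)\to 0$ fast enough as $y\to\infty$ for the relevant integrals to converge, which I would state as a standing regularity hypothesis on the solution. I do not anticipate any serious obstacle here: unlike Biau's implicit scheme, the explicit choice \eqref{eq:momentum_control} is engineered precisely so that the feed-forward functional $F$ annihilates all flow-dependent terms in \eqref{eq:momentum_ode}, reducing the closed-loop error dynamics to a decoupled, exponentially stable linear equation.
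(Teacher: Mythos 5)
Your proposal is correct and takes essentially the same route as the paper: substituting \eqref{eq:momentum_control} into \eqref{eq:momentum_ode} so that all flow-dependent terms cancel, leaving $\dot{e}_\theta=-ke_\theta$, and integrating this linear ODE. Your added remark that $\theta(0)>0$ keeps the denominator in \eqref{eq:momentum_control} from vanishing along the whole trajectory is a small well-posedness point the paper leaves implicit.
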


A consequence of Lemma \ref{thm:momentum_control} is that an explicit formula can be found for the asymptotic forcing 
\begin{equation} \label{eq:f_asymptotic_momentum}
 \lim_{t \rightarrow \infty} \left| f(t) - \left[ \frac12 C_f(t) - 2 \int_0^\infty \frac{1}{\Rey} \left( \frac{\partial U}{\partial y} \right)^2 - \frac{\partial U}{\partial y} \langle uv \rangle \, dy \right] \right| =0.
\end{equation}
This result will be supported by numerical evidence presented in \S \ref{sec:numerical_results}, that confirms the asymptotic limit if $f(t)$ is given either by \eqref{eq:momentum_control}, or computed implicitly using the method of \cite{Biau2023}.

\section{Rigorous bounds on skin friction} \label{sec:bounds}

We now address the question of whether solutions to \eqref{eq:NSE}, \eqref{eq:bcs} on the periodic domain $\Omega$ have comparable statistical properties to those of a canonical, spatially evolving, boundary layer. In particular, our aim is to prove rigorous upper bounds on the time-averaged value of skin friction, 
\[
C_f := \limsup_{t \rightarrow \infty} \frac{1}{T} \int_0^T C_f(t) dt,
\]
and to understand how these bounds scale with $\Rey$. It is important to note that even though the results of \S \ref{sec:control} ensure that the boundary layer thickness is kept constant, there is no {\it a priori} guarantee that the underlying fluid velocity field itself, or its statistics such as $C_f$, are well behaved. 

In this section, we consider solutions to \eqref{eq:NSE}, \eqref{eq:bcs} with the forcing $f(t)$ given by \eqref{eq:displacement_forcing}. This ensures that the displacement thickness satisfies $\delta^\ast(t) \rightarrow 1$ as $t \rightarrow \infty$. We therefore assume without loss of generality that \eqref{eq:NSE_d1} are the governing equations for the flow and that $\delta^\ast = 1$. Consequently, the Reynolds number should be interpreted as 
\[
\Rey = \Rey_{\delta^\ast} = \frac{U_\infty \delta^\ast}{\nu} = \frac{U_\infty}{\nu}. 
\]

Throughout this section,  we also only consider solutions to \eqref{eq:NSE_d1} that satisfy both $C_f(t) \geq 0, t \geq 0$, and the decay condition 
\begin{equation} \label{eq:decay}
\lim_{y \rightarrow \infty} y(\bs{e}_x-\bs{u}(x,y,z)) = 0, \qquad  0 \leq x \leq L_x, \, 0 \leq z \leq L_z, 
\end{equation}
which places only a weak constraint on the following analysis. A similar assumption was made in \cite{Kumar_Garaud_2020} when bounding the drag coefficient for flow past a finite-length flat plate.

We begin by taking the dot product of \eqref{eq:NSE_d1} with $\bs{u}$, integrating by parts, using the boundary conditions \eqref{eq:bcs}, the assumption \eqref{eq:decay}, and the fact that $\delta^\ast=1$ to obtain the energy equation 
\begin{equation} \label{eq:displacement_control_E}
\frac 12 \frac{d}{dt}  \| \bs{u} \|^2  + \frac{1}{\text{\Rey}}\| \nabla \bs{u} \|^2 = \frac12 C_f(t) \left(1 - \frac12 \|\bs{u}-\bs{e}_x\|_2^2 \right),
\end{equation}
a proof of which is given in the Appendix. For clarity, in the above equation, we have used the notation 
\[
\|\bs{u}\|^2 = \int_\Omega \bs{u} \cdot \bs{u} \, d\bs{x}.
\]

Next, consider a decomposition of the velocity field 
\begin{equation} \label{eq:BF_decomposition}
\bs{u}(\bs{x},t) = (1-\psi(y))\bs{e}_x + \bs{v}(\bs{x},t)
\end{equation}
where $\psi :[0,\infty) \rightarrow \mathbb{R}$ is a function satisfying 
\[
\psi(0)=1 \quad \text{and} \quad \lim_{y \rightarrow \infty} y\psi(y) = 0. 
\]
The perturbation $\bs{v} = v_x \bs{e}_x + v_y \bs{e}_y + v_z \bs{e}_z$ then inherits the periodic boundary conditions in the $\bs{e}_x,\bs{e}_z$ directions and also satisfies
\begin{equation} \label{eq:v_bcs}
\bs{v}(x,0,z,t)=0 \quad \text{and} \quad \lim_{y \rightarrow \infty} y\bs{v}(x,y,z,t)=0,
\end{equation}
for any fixed $x,z$ and $t$. Using these properties, the following energy equation can be derived for the perturbations $\bs{v}$, 
\begin{align} 
\frac12 \frac{d}{dt}\|\bs{v}\|^2 &+ \frac{1}{\Rey} \|\nabla \bs{v}\|^2 - \int v_x v_y \psi' d\bs{x}\nonumber \\
\qquad \qquad &= \frac{1}{\Rey}\int \frac{\partial v_x}{\partial y} \psi' d\bs{x} + \frac12 C_f(t) \int y \bs{v} \cdot \frac{\partial \bs{u}}{\partial y} d\bs{x}. \label{eq:energy_v}
\end{align}
To simplify this expression, we make use of the identities
\[
\|\nabla \bs{u}\|^2 = \|\nabla \bs{v}\|^2 + \|\psi'\|^2 - 2 \int\frac{\partial v_x}{\partial y} \psi' d\bs{x}.
\]
and
\[
\int y \bs{v} \cdot \frac{\partial \bs{u}}{\partial y} d\bs{x} = -\frac12 \|\bs{v}\|^2 - \int y v_x \psi' d\bs{x}
\]
which, when substituted into \eqref{eq:energy_v}, give
\[
\begin{split}
&\frac12 \frac{d}{dt}\|\bs{v}\|^2 + \frac{1}{2\Rey}\left( \| \nabla \bs{v} \|^2 + \|\nabla \bs{u}\|^2 \right) - \int v_x v_y \psi' d\bs{x}\\
&\qquad \qquad +  \frac12 C_f(t) \left( \frac12 \|\bs{v}\|^2 + \int y v_x \psi' d\bs{x}\right) = \frac{1}{2\Rey} \|\psi'\|^2. 
\end{split}
\]
Now, defining $Q_\psi(\bs{v}):=(\Rey^{-1}) \|\nabla \bs{v}\|^2 - 2\int v_x v_y \psi' dy$ and using \eqref{eq:displacement_control_E} gives
\begin{equation} \label{eq:dv_du}
\begin{split}
&\frac{d}{dt}\left[ 2\|\bs{v}\|^2 - \|\bs{u}\|^2 \right] + 2 Q_\psi (\bs{v}) \\
&\qquad + C_f(t) \left(1 - \frac12\|\bs{u}-\bs{e}_x\|^2 + \|\bs{v}\|^2 + 2 \int y v_x \psi' dy \right) = \frac{2}{\Rey} \|\psi'\|^2
\end{split}
\end{equation}

The terms multiplying $C_f(t)$ can be bounded from below via 
\begin{align}
1 - \frac12\|\bs{u}-\bs{e}_x\|^2 &+ \|\bs{v}\|^2 + 2 \int y v_x \psi' \, d\bs{x} \nonumber \\
&= 1 + \frac12 \|\bs{v}\|^2 - \frac12 \|\psi\|^2 + \int v_x(\psi + 2y \psi') d\bs{x} \nonumber \\
&\geq 1 +  \frac12\|\bs{v}\|^2 - \frac12 \|\psi\|^2 - \frac12 \|v_x\|^2 - \frac12 \|\psi + 2y\psi'\|^2 \nonumber \\
&=1 + \frac12(\|v_y\|^2 + \|v_z\|^2 ) - 2 \|y\psi'\|^2 \label{eq:cf_multipliers2}\\
& \geq 1 - 2\|y\psi'\|^2. \label{eq:cf_multipliers}
\end{align}
Under the assumption that $C_f(t) \geq 0$, it then follows from \eqref{eq:displacement_control_E} that both $\|\bs{u}\|^2$ and $\|\bs{v}\|^2$ are uniformly bounded in time. Consequently, after time-averaging \eqref{eq:dv_du} to remove the time derivatives, using $C_f \geq 0$ again, and the estimate \eqref{eq:cf_multipliers}, we have the upper bound 
\[
2\overline{Q_\psi(\bs{v})} + C_f(1-2\|y\psi'\|^2) \leq \frac{2}{\Rey}\|\psi'\|^2.
\]

Now, suppose that $\psi'$ can be chosen so that the {\em spectral constraint} 
\[
Q_\psi(\bs{v}) \geq 0
\]
holds for any $\bs{v}$ satisfying the boundary conditions \eqref{eq:v_bcs}. Since any solution to the PDE \eqref{eq:NSE_d1} gives rise to a perturbation satisfying these conditions, it would then follow that the skin friction was upper bounded by 
\begin{equation} \label{eq:bound}
C_f \leq \frac{2\|\psi'\|^2}{\Rey(1 - 2 \|y\psi'\|^2)}. 
\end{equation} 
This observation leads to the main result of the paper. 

\begin{theorem} \label{thm:bounds}
Let $\Rey > 1/\sqrt{2}$. For any solution of \eqref{eq:NSE_d1} satisfying the boundary conditions \eqref{eq:bcs} for which $C_f(t) \geq 0$ and \eqref{eq:decay} hold, the time-averaged skin friction satisfies
\[
C_f \leq \frac{\Rey}{2(\sqrt{2}\Rey -1)}.
\]
\end{theorem}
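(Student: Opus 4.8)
The plan is to exploit the reduction already established: by the argument leading to \eqref{eq:bound}, the theorem follows once one exhibits a single profile $\psi$ with $\psi(0)=1$ and $y\psi(y)\to 0$ for which the spectral constraint $Q_\psi(\bs v)\ge 0$ holds on all $\bs v$ satisfying \eqref{eq:v_bcs}, and then chooses $\psi$ so that the right-hand side of \eqref{eq:bound} equals $\frac{\Rey}{2(\sqrt{2}\Rey-1)}$. I would look for $\psi$ concentrated in a thin layer $[0,\delta]$ adjacent to the wall — starting from the piecewise-linear profile $\psi(y)=(1-y/\delta)_+$ and, should the sharp constant demand it, a smoother or two-scale variant — with the layer thickness $\delta$ left as a free parameter, ultimately of order $\Rey^{-1}$. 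For such a profile $\|\psi'\|^2$ and $\|y\psi'\|^2$ are explicit (they scale as $\delta^{-1}$ and $\delta$), so \eqref{eq:bound} becomes an explicit function of $(\delta,\Rey)$ to be minimised over $\delta$.

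The heart of the matter is verifying the spectral constraint, and here I would Fourier-decompose $\bs v$ in the periodic variables $x,z$. The horizontal-mean mode $\bs k=\bs{0}$ contributes nothing to the indefinite term $-2\int v_x v_y \psi'\,d\bs x$, since averaging incompressibility over $x$ and $z$ together with $v_y|_{y=0}=0$ forces the $(x,z)$-mean of $v_y$ to vanish identically. For each $\bs k\neq \bs{0}$ I would estimate the indefinite term against $\Rey^{-1}\int(|\partial_y\hat{\bs v}|^2+|\bs k|^2|\hat{\bs v}|^2)\,dy$ using three ingredients: the no-slip condition $\hat{\bs v}(\bs k,0)=\bs{0}$, which gives a weighted Poincar\'e bound on $\int_0^\delta|\psi'|\,|\hat v_x|^2\,dy$ by a constant times $\delta\int_0^\infty|\partial_y\hat v_x|^2\,dy$; incompressibility, which additionally forces $\partial_y\hat v_y(\bs k,0)=0$ (so $\hat v_y$ enjoys a higher-order Hardy inequality) and expresses $\hat v_y$ through horizontal derivatives of $\hat v_x,\hat v_z$; and the horizontal diffusion $|\bs k|^2|\hat{\bs v}|^2$, which controls the large-wavenumber modes. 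Splitting into low and high wavenumbers and balancing the vertical- and horizontal-gradient contributions to $\|\nabla\bs v\|^2$ by Cauchy--Schwarz, one obtains $Q_\psi(\bs v)\ge 0$ as soon as $\delta$ lies below a definite constant multiple of $\Rey^{-1}$.

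With the constraint in hand I would substitute the explicit values of $\|\psi'\|^2$ and $\|y\psi'\|^2$ into \eqref{eq:bound}, take $\delta$ as large as the constraint permits (one checks that this is where the right-hand side of \eqref{eq:bound} is minimised, for every $\Rey$ in the claimed range), and reduce the resulting elementary algebra to $C_f\le\frac{\Rey}{2(\sqrt{2}\Rey-1)}$. The hypothesis $\Rey>1/\sqrt 2$ then appears precisely as the requirement that $1-2\|y\psi'\|^2>0$ at that value of $\delta$, equivalently as the condition that the displayed right-hand side is a positive real number.

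I expect the spectral constraint to be the main obstacle, and more precisely the task of pinning down a constant sharp enough to land the stated bound: the crude estimate $2|v_x v_y\psi'|\le|\psi'|(v_x^2+v_y^2)$ with the simplest Hardy inequality already yields a uniform-in-$\Rey$ bound, but with a larger constant, and recovering the advertised one requires genuinely using incompressibility and the special status of the $\bs k=\bs{0}$ mode, together with a careful optimisation of the wavenumber split and the profile shape.
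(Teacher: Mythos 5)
Your strategy is the paper's: everything upstream of the theorem --- the reduction to \eqref{eq:bound} subject to the spectral constraint $Q_\psi(\bs{v})\ge 0$ over the class \eqref{eq:v_bcs} --- is taken as given, and the proof reduces to exhibiting one admissible $\psi$ and evaluating $\|\psi'\|^2$ and $\|y\psi'\|^2$. Where you diverge is in the choice of profile and, more importantly, in the machinery proposed for the spectral constraint. The paper takes $\psi(y)=e^{-\Rey y/\sqrt{2}}$ and closes the constraint in one line with the elementary wall estimate $|v_x(y)\,v_y(y)|\le y\,\bigl(\int_0^\infty|\partial_y v_x|^2\bigr)^{1/2}\bigl(\int_0^\infty|\partial_y v_y|^2\bigr)^{1/2}$ (fundamental theorem of calculus plus Cauchy--Schwarz, using only $\bs{v}|_{y=0}=0$), followed by a weighted Young inequality $ab\le\tfrac12(\lambda a^2+\lambda^{-1}b^2)$ in which incompressibility, $\partial_y v_y=-(\partial_x v_x+\partial_z v_z)$, together with the horizontal-gradient contributions to $\|\nabla\bs{v}\|^2$, absorbs $\|\partial_y v_y\|^2$ cheaply; optimising over $\lambda$ yields $\bigl|\int v_xv_y\psi'\,d\bs{x}\bigr|\le\tfrac{1}{2\sqrt2}\|\nabla\bs{v}\|^2\int_0^\infty y|\psi'|\,dy=\tfrac{1}{2\Rey}\|\nabla\bs{v}\|^2$, so $Q_\psi\ge0$ holds marginally. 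No Fourier decomposition, Hardy inequalities, wavenumber splitting, or special treatment of the $\bs{k}=\bs{0}$ mode is needed; your observation about that mode is correct but does no work here.

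The substantive issue with your plan is quantitative, and you flag it yourself. The theorem asserts a specific right-hand side, and your admitted fallback --- the crude estimate giving ``a uniform-in-$\Rey$ bound with a larger constant'' --- does not prove the stated inequality; the programme you sketch for sharpening it is left unexecuted. Moreover, the piecewise-linear cut-off $\psi=(1-y/\delta)_+$ cannot land the exact formula even when $\delta$ is pushed to the edge of the constraint: with $\delta=2\sqrt2/\Rey$ one finds $\|\psi'\|^2=\Rey/(2\sqrt2)$, identical to the exponential profile and hence the same asymptotic constant, but $\|y\psi'\|^2=\delta/3$ exceeds the exponential's value, so the finite-$\Rey$ denominator $1-2\|y\psi'\|^2$ is worse and positivity requires $\Rey>4\sqrt2/3$ rather than $\Rey>1/\sqrt2$. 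The two missing ingredients are therefore precisely the $\tfrac{1}{2\sqrt2}$ constant obtained via incompressibility in the Young inequality, and the exponential profile whose moments make \eqref{eq:bound} evaluate to the stated expression, with the hypothesis $\Rey>1/\sqrt2$ emerging (as you anticipated) as positivity of $1-2\|y\psi'\|^2$.
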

\begin{proof}
Let $\psi(y)$ be defined by 
\[
\psi(y) = e^{-\frac{\Rey}{\sqrt{2}}y}, \qquad y \geq 0.
\]
Then $\psi(0)=1$ and $y\psi(y)\rightarrow 0$ as $y \rightarrow \infty$. Now, standard estimates imply that  
\[
\left| \int v_x v_y \psi' d\bs{x}\right| \leq \frac{1}{2\sqrt{2}} \|\nabla \bs{v} \|^2 \int y |\psi'(y)| dy = \frac{1}{2\Rey} \|\nabla \bs{v}\|^2, 
\]
for any $\bs{v}$ satisfying $\bs{v}(x,0,z)=0$. Consequently, the spectral constraint $Q_\psi \geq 0$ is satisfied. It then follows from \eqref{eq:bound} that
\[
C_f \leq  \frac{2\|\psi'\|^2}{\Rey(1 - 2 \|y\psi'\|^2)} = \frac{\Rey / (2\sqrt{2})}{1 -1 / (\sqrt{2}\Rey)} = \frac{\Rey}{2(\sqrt{2}\Rey -1)}.
\] 
\end{proof}

An interesting corollary of Theorem \ref{thm:bounds} can be obtained by retaining the perturbation energy term $E_{\bs{v}}(t):= \frac12 \left( \|v_y\|^2 + \|v_z\|^2\right)$ in the estimate \eqref{eq:cf_multipliers2}. In Theorem \ref{thm:bounds} these terms are estimated from below by zero, but one would expect that their time average satisfies $\overline{E_v} >0$ for any turbulent solution to the governing equations. To make use of this observation, we introduce the notion of the covariance of two time-series $a(t)$ and $b(t)$, letting 
\[
\text{cov}(a,b) := \lim_{T\rightarrow \infty} \frac{1}{T} \int_0^T ( a(t) - \bar{a})(b(t)-\bar{b}) dt. 
\]
It follows from this definition that the time average of the product $a(t)b(t)$ can be expressed as  
\[
\overline{a b} = \bar{a}\bar{b} + \text{cov}(a,b). 
\]
Applying this formula when time-averaging \eqref{eq:dv_du} then gives
\[
\overline{Q_\psi(\bs{v})} + C_f (1 + \overline{E_{\bs{v}}} -2\|y\psi'\|^2) \leq \frac{2\|\psi'\|^2}{\Rey} - \text{cov}(C_f,E_{\bs{v}})
\]
We then have the following corollary, whose proof uses the same function $\psi$ as in Theorem \ref{thm:bounds}.
\begin{corollary} \label{corr:log_correction}
For any solution of \eqref{eq:NSE_d1} for which $C_f(t) \geq 0$ and \eqref{eq:decay} hold,  the time-averaged skin friction satisfies
\[
C_f \leq \frac{1}{2\sqrt{2}(1 + \overline{E_{\bs{v}}})} - \frac{\mathrm{cov}(C_f,E_{\bs{v}})}{1 + \overline{E_{\bs{v}}}}, \qquad \Rey \gg 1. 
\]
\end{corollary}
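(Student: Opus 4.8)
The plan is to re-run the argument of Theorem~\ref{thm:bounds} essentially verbatim, but to keep the perturbation-energy term $E_{\bs v}(t):=\tfrac12(\|v_y\|^2+\|v_z\|^2)$ that was thrown away in passing from \eqref{eq:cf_multipliers2} to \eqref{eq:cf_multipliers}. Concretely, I would start from \eqref{eq:dv_du}, apply the \emph{sharper} lower bound \eqref{eq:cf_multipliers2} for the bracket multiplying $C_f(t)$, and then time-average. The $\frac{d}{dt}[2\|\bs v\|^2-\|\bs u\|^2]$ term averages to zero because, as already observed, $C_f(t)\ge 0$ together with \eqref{eq:displacement_control_E} forces $\|\bs u\|^2$ and $\|\bs v\|^2$ to be bounded uniformly in time. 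The one genuinely new manipulation is to split the time-average of the product $C_f(t)E_{\bs v}(t)$ via $\overline{C_f E_{\bs v}}=C_f\,\overline{E_{\bs v}}+\mathrm{cov}(C_f,E_{\bs v})$, which reproduces exactly the inequality
\[
\overline{Q_\psi(\bs v)}+C_f\bigl(1+\overline{E_{\bs v}}-2\|y\psi'\|^2\bigr)\le\frac{2\|\psi'\|^2}{\Rey}-\mathrm{cov}(C_f,E_{\bs v})
\]
displayed immediately before the statement.

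Granting that inequality, the remainder is bookkeeping with the same profile $\psi(y)=e^{-\Rey y/\sqrt 2}$ used in the proof of Theorem~\ref{thm:bounds}. For this $\psi$ the standard estimate established there shows the spectral constraint $Q_\psi(\bs v)\ge 0$ holds for every admissible $\bs v$, hence $\overline{Q_\psi(\bs v)}\ge 0$ and that term may be dropped; the same elementary integrals evaluated in Theorem~\ref{thm:bounds} give $2\|\psi'\|^2/\Rey=\tfrac{1}{2\sqrt 2}$ and $2\|y\psi'\|^2=O(\Rey^{-1})$. Discarding the $O(\Rey^{-1})$ contribution $2\|y\psi'\|^2\,C_f$ (legitimate for $\Rey\gg 1$, and at worst an $O(\Rey^{-1})$ correction at finite $\Rey$) leaves
\[
C_f\bigl(1+\overline{E_{\bs v}}\bigr)\le\frac{1}{2\sqrt 2}-\mathrm{cov}(C_f,E_{\bs v}),
\]
and dividing by $1+\overline{E_{\bs v}}\ge 1>0$ gives the claimed bound.

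Since almost everything is bookkeeping, the \textbf{main obstacle} is really the justification of the covariance decomposition: $C_f$ is defined through a $\limsup$, whereas writing $\overline{C_f E_{\bs v}}=C_f\,\overline{E_{\bs v}}+\mathrm{cov}(C_f,E_{\bs v})$ presupposes that the Cesàro limits of $C_f(t)$, $E_{\bs v}(t)$ and $C_f(t)E_{\bs v}(t)$ all exist. I would either state this as an explicit (mild) regularity hypothesis, or pass to a common sequence of times $T_n\to\infty$ along which all three averages converge — such sequences exist because the uniform-in-time bound on $\|\bs v\|^2$ makes $E_{\bs v}(t)$ bounded — and record the bound along that sequence. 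A secondary point worth one sentence is the interpretation: the estimate improves on Theorem~\ref{thm:bounds} only when $\overline{E_{\bs v}}$ is not asymptotically negligible and $\mathrm{cov}(C_f,E_{\bs v})$ is not too negative, which is precisely what makes the subsequent logarithmic-growth assumption on the turbulent kinetic energy meaningful.
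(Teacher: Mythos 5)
Your proposal is correct and follows essentially the same route as the paper: retain the perturbation-energy term from \eqref{eq:cf_multipliers2} when time-averaging \eqref{eq:dv_du}, split $\overline{C_f E_{\bs{v}}}$ via the covariance identity, and reuse the exponential profile $\psi$ (and its spectral constraint) from Theorem \ref{thm:bounds} before dividing by $1+\overline{E_{\bs{v}}}$. Your remark that the covariance decomposition tacitly requires the long-time averages of $C_f$, $E_{\bs{v}}$ and their product to exist is a fair point that the paper passes over, but it does not change the argument.
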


\subsection{Discussion of the theoretical results}

One should not expect to match the value of the constant in the analytically provable bound $C_f \leq (2\sqrt{2})^{-1}$ with the true value of $C_f$ observed in numerical simulations, although numerically optimising $\psi$ in Constantin-Doering-Hopf bounding arguments can often improve the absolute value of such constants by an order of magnitude (see, for example, studies by \cite{PLASTING_KERSWELL_2003} and \cite{Fantuzzi16}). However, one may instead hope to capture the correct asymptotic scaling of $C_f$ with Reynolds number. From this perspective, the result Theorem \ref{thm:bounds} is  more important. That  $C_f$ is provably bounded, independent of the Reynolds number, is within a logarithmic correction of empirically-observed scaling expected for spatially evolving boundary layers.

Another view is provided if we define a friction Reynolds number based on the displacement thickness by
\[
\Rey_\tau = \frac{U_\tau \delta^{*}}{\nu} = Re_{\delta^\ast} \sqrt{ \frac{C_f}{2}}.
\]
Theorem \ref{thm:bounds} then implies that 
\[
\Rey_\tau \leq 2^{-\frac54} \Rey_{\delta^{\ast}}.
\] 
This is not dissimilar to the observation of \cite{Schlatter2010} that for canonical, spatially-evolving, boundary layers 
\[
\text{Re}_\tau \approx c \Rey_{\delta^\ast}^{0.843}, 
\]
where  $c = 1.13 (\delta^{\ast})^{0.157} \theta^{0.843} \delta^{-1} = \mathcal{O}(1)$ is simply a constant accounting for the various possible definitions of the boundary layer thickness, and $\delta$ is the boundary layer thickness defined in terms of $99\%$ of the freestream velocity.  

The exponent $0.843$ in this observation should be treated with caution, being based on an empirical fit to data in a small (in the context of understanding asymptotic scalings) range of Reynolds numbers. Indeed, the high Reynolds number scaling corresponding to the logarithmic friction law would be of the form 
\[
\Rey_\tau \sim \frac{\Rey_{\delta^\ast}}{\log{\Rey_{\delta^\ast}}}, \qquad \Rey_{\delta^\ast} \gg 1. 
\]

Whether an analytical proof can bridge this `logarithmic' gap is an important open question in theoretical fluid mechanics. It is therefore of interest that Corollary \ref{corr:log_correction} provides partial evidence that the origin of such a correction can now be understood, at least for periodic boundary layers. Indeed, the improved bound 
\[
C_f \leq \frac{1}{2\sqrt{2}(1 + \overline{E_{\bs{v}}})} - \frac{\mathrm{cov}(C_f,E_{\bs{v}})}{1 + \overline{E_{\bs{v}}}}
\]
would imply such a logarithmic correction, if it could be proven that the 
\[
\sup_{\Rey_{\delta^\ast} >0} |\mathrm{cov}(C_f,E_{\bs{v}})| < \infty \quad \text{and} \quad \lim_{\Rey_{\delta^\ast} \rightarrow \infty} \left( \frac{\overline{E_{\bs{v}}}}{\log{\Rey_{\delta^\ast}}} \right)> 0,  
\]
that is, if the turbulent perturbations themselves exhibit a logarithmic growth of energy and are only weakly correlated with the instantaneous skin-friction.

\section{Numerical Implementation}

To perform DNS and ILES of the periodic boundary layer equations \eqref{eq:NSE} we use the \texttt{Xcompact3d} framework, a suite of fluid flow solvers dedicated to the study of turbulent flows on supercomputers \cite{Bartholomew2020}, which has been extensively validated for wall-bounded turbulent flows \cite{diaz2017wall,Mahfoze2021,o2023optimisation}. The ILES are based on a strategy that introduces targeted numerical dissipation at the small scales through the discretisation of the second derivatives of the viscous terms \cite{Lamballais2011,Dairay2017}. It was shown in these studies that it is possible to design a high-order
finite-difference scheme in order to mimic a subgrid-scale model for ILES based on the concept of Spectral Vanishing Viscosity, at no extra computational cost and with excellent performance for wall-bounded turbulent flows \cite{Mahfoze2021}.

The incompressible flow solver within \texttt{Xcompact3d} is based on sixth-order compact finite-difference schemes \cite{Laizet2009} for the spatial discretisation and a fractional-step method using a semi-implicit approach that combines the Crank-Nicholson and 3rd-order Adams–Bashforth methods. Within the fractional-step method, the incompressibility condition is dealt with by directly solving a Poisson equation in spectral space using 3D Fast Fourier Transforms (FFTs) and the concept of the modified wavenumber \cite{Lele1992}. The velocity-pressure mesh arrangement is half-staggered to avoid spurious pressure oscillations \cite{Laizet2009}.

The simplicity of the mesh allows an easy implementation of a 2D domain decomposition based on pencils \cite{laizet2011incompact3d}. The computational domain is split into a number of sub-domains (pencils) which are each assigned to an MPI-process. The derivatives and interpolations in the x-direction (y-direction, z-direction) are performed in X-pencils (Y-pencils, Z-pencils), respectively. The 3D FFTs required by the Poisson solver are also broken down as a series of 1D FFTs computed in one direction at a time. Global transpositions to switch from one pencil to another are performed with the MPI command \verb|MPI_ALLTOALL(V)|. The flow solvers within \texttt{Xcompact3d} can scale well with up to hundreds of thousands of MPI-processes for simulations with several billion mesh nodes \cite{laizet2011incompact3d,Bartholomew2020}. All the simulations in this study were carried out on ARCHER2, the UK supercomputing facility. It is equipped with nodes based on dual AMD EPYC$^{TM}$ 7742 processors running at 2.25 GHz, totaling 128 cores per node.

In this numerical study, the following incompressible Navier-Stokes equations are solved 
\begin{equation} \label{eq:ILES_1}
\begin{split}
\frac{\partial \bs{u}}{\partial t}+ \frac{1}{2} [\nabla \cdot (\bs{u} \otimes\bs{u})+(\bs{u} \cdot \nabla)\bs{u}]&=-\nabla p + (I-\bs{\Gamma})\mathcal{D}_1+\bs{\Gamma}\mathcal{D}_2 +\bs{f}(t),  \\
\nabla \cdot \bs{u}&=0.
\end{split}
\end{equation}

With a slight abuse of notation, $\mathcal{D}_1$ indicates that the diffusion term $\nu \Delta \bs{u}$ is directly implemented using a conventional sixth-order finite-difference scheme for its second-order spatial derivatives; while $\mathcal{D}_2$ refers to a numerical implementation of $\nu \Delta \bs{u}$ which adds targeted numerical dissipation at small scales via a customised sixth-order finite-difference scheme.  Full details can be found in 
\cite{Mahfoze2021}, \cite{Dairay2017} and \cite{Lamballais2011}. The operator $\bs{\Gamma}$ is used to balance the weight of the two diffusive terms and can depend on the both local geometry and flow velocities via $(\bs{\Gamma}\bs{u})(x) = \Gamma(x,\bs{u})\bs{u}(x)$, where $\Gamma(x,\bs{u}) \in \mathbb{R}^{3 \times 3}$.  The choice $\Gamma=0$ corresponds to a DNS implementation, in which \eqref{eq:ILES_1} is mathematically equivalent to \eqref{eq:NSE}. The case $\Gamma = I$ corresponds to ILES, for which, the unknowns $\mathbf{u}(x,t)$ and $p(x,t)$ should be interpreted as the large-scale component of velocity and pressure. Note finally that the advection terms in \eqref{eq:ILES_1} are written in skew-symmetric form in order to reduce aliasing errors \cite{Kravchenko1997}. 

%

\subsection{Numerical Implementation of the forcing term}

To allow for a validation with the numerical study of \cite{Biau2023}, we will perform simulations in the case of momentum thickness control (see \S \ref{sec:control_theta}) where the forcing amplitude $f(t)$ is given by \eqref{eq:momentum_control}. The distinction between the two considered numerical methods (i.e., DNS and ILES) introduces an extra subtlety in terms of how $f(t)$ must be chosen to achieve a desired unity value of the momentum thickness. For DNS, we use the explicit formula \eqref{eq:momentum_control}, letting
\begin{equation} \label{eq:f_DNS}
f(t) = f_{\text{DNS}}(t) := \frac{-ke_\theta(t)   - \frac12 C_f(t) + \frac{2}{\Rey} \int_0^\infty \left( \frac{ \partial U}{\partial y}\right)^2 dy - 2 \int_0^\infty \langle uv \rangle \frac{\partial U}{\partial y} \, dy}{\theta(t)}.
\end{equation}
Lemma \ref{thm:momentum_control} then guarantees that $\lim_{t \rightarrow \infty} \theta(t) =1$ for any solution to PDE \eqref{eq:NSE}, meaning that one would expect a DNS of \eqref{eq:ILES_1} to have the same behaviour. We show in \S \ref{sec:numerical_results} that this is indeed the case.

In the case of ILES it is not appropriate to  define the forcing amplitude by \eqref{eq:momentum_control} since this expression is only accurate for solutions of the Navier-Stokes equations \eqref{eq:NSE}, while ILES only gives an approximate solution via \eqref{eq:ILES_1}. This subtlety can be avoided by adding an `integral control' term to the forcing and using 
\[
f_\text{ILES}(t) := f_{\text{DNS}}(t) - k^2 \frac{z(t)}{\theta(t)}
\]
where $z(t) \in \mathbb{R}$ is defined as the integral of the momentum thickness via
\[
\frac{dz}{dt} = e_\theta(t) = 1- \theta(t). 
\]
This accounts, with the addition of only one extra scalar-valued variable, for the small difference between the theoretical value \eqref{eq:momentum_control} and the forcing required to maintain a unity value of momentum thickness in ILES. For simplicity, a gain value of $k=1$ is used throughout this paper. 

\subsection{Computational domains, data collection, and computational cost}

%

\begin{table} 
   \begin{center}
 \def~{\hphantom{0}}
   \begin{tabular}{lccccccc}
   $Re_\theta$ & Method & Domain Size  & Grid & $\Delta T^{+}$ & $T^{+}_{f}$ & Total Cost \\ \hline
~&~&($L_{x} \times L_{y} \times L_{z}$)/$\theta$ &($n_{x} \times n_{y} \times n_{z}$)&~&~&~(Core-hour) \\\hline
1000~&~DNS~&~$40\times30\times15$&~$192\times257\times128$~&~0.0214~&~4285~&~1280\\\hline
2000~&~DNS~&~$40\times30\times15$&~$352\times465\times224$~&~0.0177~&~7079~&~10752\\\hline
4060~&~ILES~&~$60\times30\times15$&~$320\times305\times128$~&~0.0482~&~12958~&~3584\\\hline 
6500~&~ILES~&~$80\times30\times15$&~$640\times449\times224$~&~0.0407~&~20347~&~31360\\\hline 
8300~&~ILES~&~$120\times30\times15$&~$1216\times545\times272$~&~0.0393~&~27538~&~163840 \\\hline 
\end{tabular}
   \caption{Simulation details for the DNS and ILES   of \eqref{eq:ILES_1}.}
   \label{tab:II}
   \end{center}
 \end{table}


The DNS are performed at $Re_{\theta}=1000, 2000$ and the ILES at $Re_{\theta}=4060, 6500, 8300$. These Reynolds numbers have been selected in order to allow for a direct comparison with published numerical data of a spatially evolving turbulent boundary layer. The spatial discretisation, temporal discretisation, and computational cost of each simulation run are given in Table \ref{tab:II}. With increasing Reynolds number it is necessary to use both a longer temporal window and a larger domain in the streamwise direction to obtain converged statistics. For example, $L_{x}$ is chosen to be $40/\theta$ for $\Rey_\theta=1000$ and is increased by a factor of three for $\Rey_\theta = 8300$. 

The flow statistics are computed by first performing a space/time average to obtain the mean streamwise velocity profile $U(y)$ and, subsequently, the wall shear stress $\tau_{w}=\nu U'(0)$ and friction velocity $U_{\tau}:=\sqrt{\tau_{w} / \rho} = \sqrt{\tau_w}$. Using $l_\ast=\nu / U_{\tau}$ as a length-scale and $t_\ast = l_\ast/ U_{\tau}$ as a time-scale, all variables can be expressed in wall units, e.g.,  $u^{+}=U/U_{\tau}$,  $y^{+}=y/l_\ast$, and $t^{+}=t/t_\ast$. In addition to the displacement thickness $\delta^\ast$ and momentum thickness $\theta$, we will also consider the `shape factor' $H_{12}:=\delta^\ast/\theta$, and the wake parameter
\begin{equation} \label{eq:cwake}
C_\text{wake} = \int_0^\infty (U^{+}_{\infty}- U^{+}) d(y/\delta) 
\end{equation}
introduced by \cite{Coles1956}, where  $\delta$ is the boundary layer thickness defined in terms of $99\%$ of the freestream velocity. 

The mean values of all flow statistics reported in subsequent sections are collected after the flow was observed to have transitioned to a statistically steady state. For the five Reynolds numbers considered, this was deemed to occur at non-dimensional times $t_0^+ = 1976, 3418, 7534, 12208$ and $16511$, ordered in terms of increasing $\Rey_\theta$. Each flow statistic was then averaged over a window $t^+ \in [t_0^+,T_{f}^+]$, respectively (see Table \ref{tab:II}), in order of increasing $\Rey_\theta$.
To describe statistical convergence, suppose that $g(t)$ is a given time-dependent flow property of interest, and $\bar{g}$ is its average over the window $[t_0^+,T_{f}^+]$. Letting
\[G(t^+) = \frac{1}{t^+-t_0^+} \int_{t_0^+}^{t^+} g(s) ds\] 
be the moving average of $g$, we define the maximum percentage deviation of the moving average from the reported mean over the final half the averaging window by
\[
E = \max \left\{ 100\% \cdot \left|\frac{G(t^+) - \bar{g}}{\bar{g}} \right| : \frac{ t_0^+ + T_f^+}{2} \leq t^+ \leq T_f^+ \right\}.
\]
The maximum value of $E$ for any of the flow statistics $\delta^\ast, U_\tau, f, C_\text{wake}$ that we report subsequently in Table \ref{tab:IV} is, $1.2\%,0.9\%,0.9\%,1.1\%$ and $1.5\%$, respectively, for the five cases considered in order of increasing $\Rey_\theta$.

Finally, we note that there are significant differences in computational cost, detailed in \ref{tab:II}, of the proposed numerical scheme in comparison to that of the highlighted reference studies listed in Table \ref{tab:III}. The DNS of \cite{Sillero2013}, achieving a maximum Reynolds number of $\Rey=6500$, is computationally expensive, and required 45 million core hours while using a large number of cores (32,768).  The LES of \cite{Eitel-Amor2014}, achieving a maximum Reynolds number of $\Rey_\theta = 8300$, used only around 1 million core hours and 4,096 cores. In this study, we sought a balance between computational cost and efficiency. At the highest Reynolds number considered in this study, $\Rey_\theta =8300$, the proposed numerical scheme used 160,384 core-hours on 8,192 cores, a six-fold reduction compared to the LES of \cite{Eitel-Amor2014}.


\section{Results} \label{sec:numerical_results}

In \S \ref{sec:verification} the DNS implementation of the numerical method proposed in this paper is validated against the DNS of \cite{Biau2023} at $\Rey_\theta=1000, 2000$. Subsequently, in \S \ref{sec:results_spatial_comparison}, an ILES implementation of the periodic boundary layer equations \eqref{eq:ILES_1} at $\Rey_\theta = 4060, 6500, 8300$ is compared with reference data from the DNS of \cite{Sillero2013} and \cite{Schlatter2010}, and with the LES of \cite{Eitel-Amor2014}. The reference data were obtained for spatially evolving turbulent boundary layers. 

A common colour scheme is used throughout \S \ref{sec:numerical_results} to indicate data from simulations at different Reynolds numbers:  ({\rule[0.5ex]{0.5cm}{1pt}}) black for $\Rey_\theta=1000$; ({\color{purple} {\rule[0.5ex]{0.5cm}{1pt}}}) purple for $\Rey_\theta=2000$; ({\color{blue} {\rule[0.5ex]{0.5cm}{1pt}}}) blue for $\Rey_\theta=4060$; ({\color{red} {\rule[0.5ex]{0.5cm}{1pt}}}) red for $\Rey_\theta=6500$; and ({\color{green!70!black} {\rule[0.5ex]{0.5cm}{1pt}}}) green for $\Rey_\theta=8300$. 

Results using the method presented in this paper are shown with solid lines ({\rule[0.5ex]{0.5cm}{1pt}}), and those from an implementation of the method of \cite{Biau2023} using \texttt{Xcompact3d} are shown with dotted lines (\hdashrule[0.5ex]{0.5cm}{1pt}{1pt}). Data extracted directly from \cite{Biau2023} are shown with ($\times$); those from the DNS of \cite{Schlatter2010} at $\Rey_\theta = 1000, 2000, 4060$ are shown with circles ($\ocircle$), while data from the DNS of \cite{Sillero2013} at $6500$ are shown with closed triangles ($\blacktriangledown$). The LES of \cite{Eitel-Amor2014} at $\Rey_\theta = 6500, 8300$ are also shown with circles ($\ocircle$), since on all subsequent figures these data are always visually distinguishable from those of \cite{Schlatter2010}. 

Table \ref{tab:III} also reports the grid resolutions in wall units between the present study and the reference works. For DNS, achieving accurate statistical results requires capturing a wide range of scales, e.g., from the boundary layer thickness to the Kolmogorov length scale.  Proper grid resolution is essential for this purpose. While it is ideal to fully resolve both scales, previous DNS studies have demonstrated that good agreement with experiments can still be achieved even when the Kolmogorov length scale remains partially under-resolved. This trade-off between accuracy and computational cost is noteworthy. \cite{Moin98} proposed  one possible set of grid resolutions, expressed in wall units as ($\Delta x^{+} = 14.3, \Delta y^{+} = 0.33, \Delta z^{+} = 4.8$), which seeks to strike this balance. Of the considered reference studies, the DNS of \cite{Sillero2013} used a grid resolution within the same range for both normal and spanwise directions, while \cite{Schlatter2010} used a fine spatial resolution in the normal direction, and a coarser grid in the streamwise and spanwise directions. The DNS performed in this paper have similar mesh resolutions to those in the reference data. For the ILES, following the work of \cite{Mahfoze2021}, the spatial resolution is more or less three times larger per spatial direction than in the DNS, while keeping a fine mesh close to the wall in the normal direction.


Finally, we note that the reference data are obtained for a spatially evolving turbulent boundary layer, and the reported wall units are typically scaled based on the $\Rey_{\theta}$ close to the end of the computational domain. In contrast, for the periodic turbulent simulations conducted here, $\Rey_{\theta}$ is constant throughout the simulation, which is advantageous since it allows a targeted choice of mesh resolution to be made {\it a priori}.

\begin{table} 
   \begin{center}
 \def~{\hphantom{0}}
   \begin{tabular}{lcccccccc}
   Case & Method & $\Delta x^{+}$ & $\Delta y^{+}_{min}$  & $\Delta z^{+}$ & $Re_{\theta}$ & Marker & Line Patterns\\ \hline
Present~&~DNS~&~10~&~0.5~&~5.5~&~-&~$\square$ &~Solid(-)\\\hline
Present~&~ILES~&~30~&~1.5~&~16.5~&~-&~$\square$ &~Solid(-)\\\hline
\cite{Schlatter2010}~&~DNS~&~18~&~0.35~&~9.6&~4300&~$\bigcirc$&~dashed(- -)\\\hline
\cite{Eitel-Amor2014}~&~LES~&~18~&~0.06~&~8~&~8300&~$\bigcirc$ &~dashed(- -)\\\hline
\cite{Sillero2013}~&~DNS~&~7~&~0.32~&~4.07~&~6500&~$\triangledown$&~dashdotted ($-\cdot$)\\\hline
\end{tabular}
   \caption{DNS and LES grid resolutions, and figure formatting conventions.}
   \label{tab:III}
   \end{center}
 \end{table}

\subsection{Validation against \cite{Biau2023}} \label{sec:verification}

We perform DNS by solving \eqref{eq:ILES_1} with $\Gamma=0$ and with the forcing amplitude $f$, given by \eqref{eq:f_DNS}, at $\Rey_\theta=1000, 2000$. Data are compared with those extracted from \cite{Biau2023} (obtained with a different flow solver), and with a separate implementation of Biau's numerical scheme (in which $\theta=1$ is imposed at each time step via the iterative solution of an optimisation problem).

\begin{figure}
\centering
\includegraphics[width=1.0\linewidth, trim={0.cm 0cm 0.cm 0.0cm},clip]{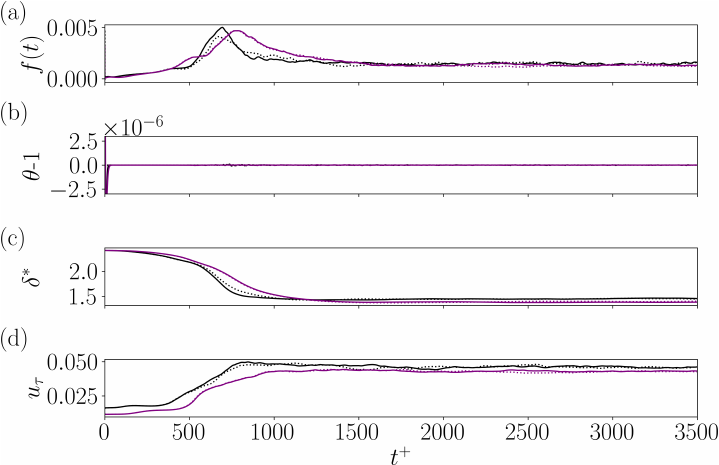}
\caption{Temporal variation of the forcing amplitude $f$, momentum thickness $\theta$, displacement thickness $\delta^\ast$ and friction velocity $u_\tau$. Data from the present DNS using $f_\text{DNS}$ are shown with solid lines ({\rule[0.5ex]{0.5cm}{1pt}}$\Rey_\theta=1000$;  {\color{purple}\rule[0.5ex]{0.5cm}{1pt} }$\Rey_\theta=2000$)
 and from implementation of Biau's method in \texttt{Xcompact3d} with dotted lines (\hdashrule[0.5ex]{0.5cm}{1pt}{1pt}$\Rey_\theta=1000$;  {\color{purple}\hdashrule[0.5ex]{0.5cm}{1pt}{1pt} }$\Rey_\theta=2000$). }
\label{fig:V}
\end{figure}

The temporal evolution of $f,\theta, \delta^\ast$ and $u_\tau$ is shown in figure  \ref{fig:V}. The instantaneous friction velocity $u_\tau$ is defined in terms of the wall normal derivative of the streamwise- and spanwise-averaged wall normal velocity,  
\[
u_\tau = \nu \frac{\partial U}{\partial y}(0,t).
\]
For the two implementations of Biau's method, the momentum thickness satisfies $\theta =1$ at all times. The present method has the same behaviour as the method initially designed by Biau, apart from a small initial transient corresponding to an exponential convergence of the error $e_\theta \rightarrow 0$. For all three methods (Biau original method with his solver, Biau original method with \texttt{Xcompact3d}, and the present method with \texttt{Xcompact3d}), $f$ approaches a statistically steady state with mean values of approximately $0.0012$ for $\Rey_\theta = 1000$, and $0.0014$ for $\Rey_\theta=2000$. This confirms the veracity the analytical expression for the asymptotic value of the forcing term \eqref{eq:f_asymptotic_momentum} in this case. 

It is interesting to observe that, for all methods, the displacement thickness $\delta^\ast$, whose value is not prescribed, also converges to a constant. This constant is observed to decrease with increasing Reynolds number, implying a corresponding decrease in shape factor $H_{12} = \delta^\ast / \theta$, which is in line with the observations presented in \cite{Schlatter2010}. Similarly, the friction velocity $u_\tau$ also decreases with increasing Reynolds number.  Overall, the relative error between any of the three methods for any of the flow statistics $U_{\tau}, \delta, \delta^\ast, \theta, H_{12}, C_{f}$, and $C_\text{wake}$ does not deviate by more than $1\%$, indicating excellent agreement between the current implementation and with the data presented in  \cite{Biau2023}.

\begin{figure}
\centering
\includegraphics[width=1\linewidth, trim={0.cm 0cm 0.cm 0.0cm},clip]{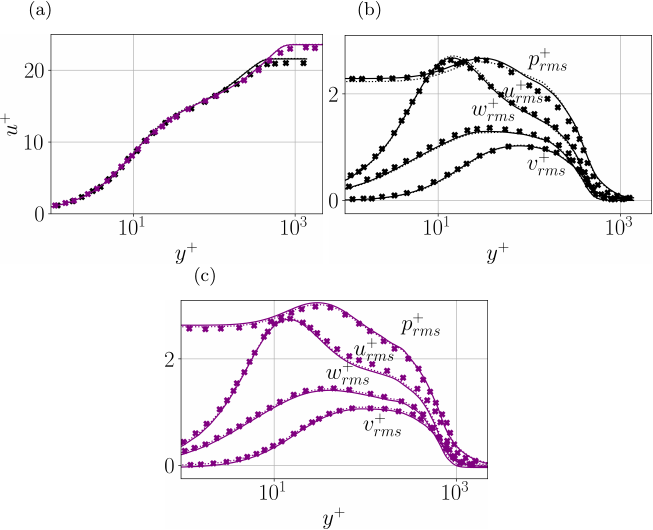}
\captionsetup{justification=centering}
\caption{A comparison of DNS of the current method with that of \cite{Biau2023}: (a) Mean streamwise velocity $u^+$; (b) rms velocities and pressures at $Re_\theta=1000$; (c) rms velocities and pressures at $Re_\theta=2000$.  Data from \cite{Biau2023} are shown with markers ($\mathbf{\times}$~$\Rey_\theta=1000$; {\color{purple} $\mathbf{\times}$}~$\Rey_\theta=2000$), from the present DNS using $f_\text{DNS}$ with solid lines  ({\rule[0.5ex]{0.5cm}{1pt}}$\Rey_\theta=1000$;  {\color{purple}\rule[0.5ex]{0.5cm}{1pt} }$\Rey_\theta=2000$), and from the implementation of Biau's method in \texttt{Xcompact3d} with dotted lines (\hdashrule[0.5ex]{0.5cm}{1pt}{1pt}$\Rey_\theta=1000$;  {\color{purple}\hdashrule[0.5ex]{0.5cm}{1pt}{1pt} }$\Rey_\theta=2000$). }
\label{fig:IV}
\end{figure}

Figure \ref{fig:IV} shows the mean stream wise velocity $u^+$ and root-mean-square (rms) velocity and pressure profiles obtained from the three methods. The  implementation of Biau's method in \texttt{Xcompact3d} compares very well with the present method. Both exhibit small, but noticeable, differences from data extracted directly from \cite{Biau2023} which are most prominent in  $u^{+}$ and $p_{rms}^{+}$ in the outer layer for $\Rey_\theta=1000$. These discrepancies are lower at $\Rey_\theta=2000$ and it is important to note that at this Reynolds number the implementation of the present method in \texttt{Xcompact3d} is shown in \S \ref{sec:spatial_comp_velocities} to compare very well with reference DNS data from spatially evolving boundary layers.

 \subsection{Comparison against spatially evolving turbulent boundary layer data} \label{sec:results_spatial_comparison}

\subsubsection{Statistical quantities}

\begin{table}
   \begin{center}
   \def~{\hphantom{0}}
   \resizebox{\textwidth}{!}{ 
   \begin{tabular}{l|ccccccccc}
   $ Case $     &~$Re_\theta$    &~$Re_{\tau}$&~$Re_{\delta^{*}}$&~$U_{\tau}$&~$H_{12}$&~$C_{f}$&~$f(t)$&~$C_{wake}$ &~$\delta^{*}$ \\[2pt]\hline
Present&~1000&~398&~1448&~0.0463&~1.448&~0.00429&~0.00141&~3.578&~1.448 \\\hline
\cite{Schlatter2010}&~1006&~359&~1459&~0.0462&~1.445&~0.00426&~-&~4.062&~1.451\\\hline
\cite{Sillero2013}&~1100&~445&~1585&~0.0462&~1.434&~0.00426&~-&~3.586&~1.434\\\hline\hline
Present&~2000&~712&~2806&~0.0421&~1.403&~0.00359&~0.00121&~3.938&~1.403\\\hline
\cite{Schlatter2010}&~2000&~671&~2827&~0.0421&~1.414&~0.00353&~-&~4.214&~1.414\\\hline
\cite{Sillero2013}&~1968&~690&~2780&~0.0422&~1.416&~0.00356&~-&~3.984&~1.418\\\hline\hline
Present&~4060&~1315&~5614&~0.0385&~1.383&~0.00298&~0.00102&~4.269&~1.383\\\hline
\cite{Schlatter2010}&~4061&~1271&~5633&~0.0385&~1.387&~0.00297&~-&~4.431&~1.387\\\hline
\cite{Sillero2013}&~4000&~1306&~5589&~0.0390&~1.377&~0.00304&~-&~4.324&~1.375\\\hline\hline
Present&~6500&~2141&~8749&~0.0373&~1.346&~0.00279&~0.00097&~4.085&~1.346\\\hline
\cite{Eitel-Amor2014}&~6500&~1972&~8886&~0.0368&~1.367&~0.00273&~-&~4.494&~1.364\\\hline
\cite{Sillero2013}&~6500&~1989&~8879&~0.0368&~1.363&~0.00270&~-&~4.492&~1.363\\\hline\hline
Present&~8300&~2763&~10987&~0.0368&~1.324&~0.00272&~0.00095&~3.976&~1.324\\\hline
\cite{Eitel-Amor2014}&~8300&~2557&~11192&~0.0364&~1.348&~0.00265&~-&~4.464&~1.352\\\hline
\end{tabular}}
\caption{A comparison of flow statistics between periodic boundary layer and spatially evolving boundary layer simulations.}
\label{tab:IV}
\end{center}
\end{table}

Table \ref{tab:IV} presents some key statistics for the turbulent boundary layer flows simulated with the present method and those of the selected reference data for spatially evolving turbulent boundary layers. It should be emphasised that for the present method the fixed control parameter is $\Rey_\theta$, while the other reported statistics are emergent properties of the flow. Furthermore, throughout this section the friction Reynolds number is defined in terms of the $99\%$ boundary layer thickness via
$ \Rey_\tau = U_\tau \delta / \nu$.

The wake statistics for the present method compare very well with the reference data, with $H_{12}, \Rey_{\delta^{\ast}}$ and $U_\tau$  not deviating by more than $2\%$ from any of the corresponding reference data values, for any of the considered Reynolds numbers $\Rey_\theta$. The skin friction coefficient $C_f$ also compares very well with the reference data, with a maximum deviation of only $3.3\%$ from the results of \cite{Sillero2013} at $\Rey_\theta=6500$. 

Slightly larger discrepancies are observed for the friction Reynolds number $\Rey_\tau$ and the wake coefficient $C_\text{wake}$. These  have  maximum deviations of $10.8\%$ and $11.9\%$, respectively, from the DNS of \cite{Schlatter2010} at $\Rey_\theta=1000$, although these deviations decrease with Reynolds number and are below $4\%$ for $\Rey_\theta = 4060$. The larger discrepancies for $\Rey_\tau$ and $C_\text{wake}$ can be explained by the fact that both are defined in terms of the the $99\%$ boundary layer thickness $\delta$, as opposed to $H_{12}, \Rey_{\delta^\ast}$ which are defined using the displacement thickness $\delta^\ast$. Since $\delta$ is a pointwise measure of the boundary layer thickness, this is a less robust statistic than the integral quantity $\delta^\ast$. This is important, given that the forcing $\bs{f} \sim y \frac{\partial \bs{u}}{\partial y}$ required to maintain the momentum thickness is of largest magnitude away from the wall. 

The above observations are confirmed by figure \ref{fig:XIII} which shows excellent agreement between the present method and the reference data in terms of the scaling of displacement Reynolds number (in figure \ref{fig:XIII}~(a)), and the  skin friction coefficient and shape factor (in figure \ref{fig:XIII}~(b)) with $\Rey_\theta$. Slightly larger deviations for the reference data can be observed in figure \ref{fig:XIII}~(a) in terms of the scaling of $\Rey_\tau$ with $\Rey_\theta$. Despite this, numerical fits (shown in the legend of figure \ref{fig:XIII}~(a)) to the data reveal near-linear scalings of $\Rey_\tau$ and $\Rey_{\delta^\ast}$ with the control parameter $\Rey_\theta$, namely 
\[
\Rey_\tau \approx \alpha_1 \Rey_\theta^{\beta_1} \quad \text{and} \quad \Rey_{\delta^\ast} \approx \alpha_2 \Rey_\theta^{\beta_2},
\]
for prefactors in the ranges $0.41 \leq \alpha_1 \leq 1$ and $ 1.82 \leq \alpha_2 \leq 2.84$ and scaling exponents in the ranges $0.87 \leq \beta_1 \leq 0.97$ and $ 0.95 \leq \beta_2 \leq 0.97$. The smaller range of scaling exponents for $\beta_2$ again confirms that $\Rey_{\delta^\ast}$ is a more robust statistic than $\Rey_\tau$. 

These numerical fits are consistent with the theoretical scaling laws proven in \S \ref{sec:bounds}, which state that the scaling exponents cannot exceed $\beta =1$. It is still an open theoretical question whether the small observed deviations of the scaling exponents $\beta_1,\beta_2$ from this value arise due to a logarithmic term in the true scaling law for the skin friction in  turbulent boundary layers. 

\begin{figure}
\centering
\includegraphics[width=1\linewidth, trim={0.cm 0cm 0.cm 0.0cm},clip]{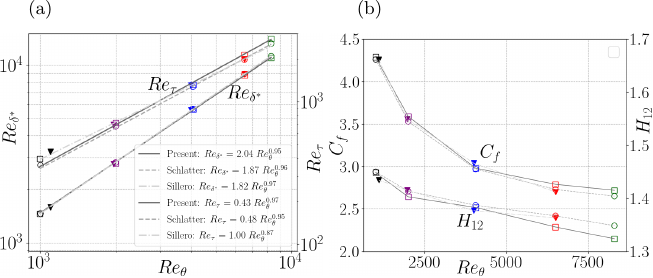}
\captionsetup{justification=centering}
\caption{The scaling of (a) $\Rey_{\tau}$ and $\Rey_{\delta^\ast}$; and (b) $C_f$ and $H_{12}$ with   $\Rey_{\theta}$. The markers indicate: ($\ocircle$) for both  DNS data from \cite{Schlatter2010}'s and LES data from \cite{Eitel-Amor2014}'s;  ($\blacktriangledown$) for DNS data from \cite{Sillero2013}; and ($\square$) data from the present method. }
\label{fig:XIII}
\end{figure}

\subsubsection{Velocity profiles, fluctuating statistics, and TKE budgets} \label{sec:spatial_comp_velocities}

\begin{figure}
\centering
\includegraphics[width=1\linewidth, trim={0.cm 0cm 0.cm 0.0cm},clip]{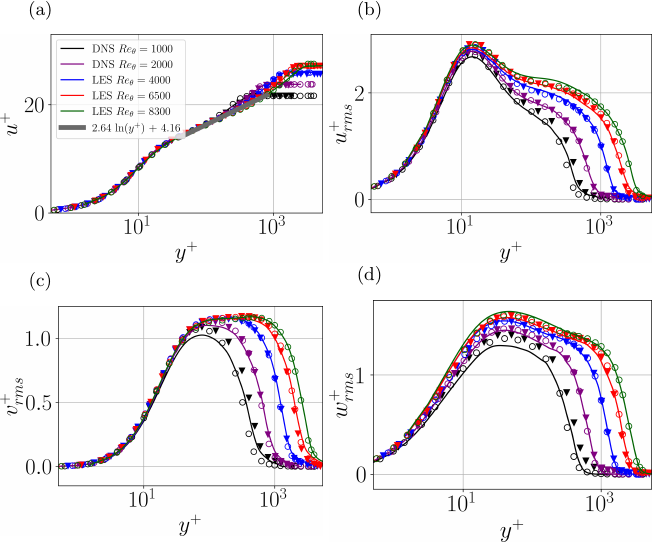}
\caption{Profiles of (a) $u^+$; (b) $u_{rms}^+$; (c) $v_{rms}^+$; and (d) $w_{rms}^{+}$. Reference data are shown with markers: ($\ocircle$) for both the DNS of \cite{Schlatter2010} and the LES of  \cite{Eitel-Amor2014}; ($\blacktriangledown$) for the DNS of \cite{Sillero2013}. Results with the present method are shown with solid lines. The colour convention is explained in \S \ref{sec:numerical_results}.}
\label{fig:VI}
\end{figure}

Figure \ref{fig:VI}~(a) shows the mean velocity profile $u^+$ in comparison with the reference data from spatially evolving boundary layer simulations at comparable Reynolds numbers. Excellent agreement can be observed in the inner layer, with only small deviations in the wake region. In the linear and logarithmic regions, a grey line shows the fit $\kappa^{-1} \ln{y^{+}} + B$ with $\kappa =0.379$ and $B=4.161$ to the present data, which is in good agreement with that presented in \cite{Eitel-Amor2014}. As $Re_{\theta}$ increases, the asymptotic value of $u^+$ in the wake region increases from $u^{+}=21.60$ at $Re_{\theta}=1000$ to $u^{+}=27.13$ at $Re_{\theta}=8300$, indicated by the arrow in Figure \ref{fig:VI} (a), with very good agreement between the reference data and the present method.

Figures \ref{fig:VI}~(b-d) show profiles of rms velocity fluctuations. Again, these reveal a good agreement between data from periodic and spatially evolving simulations. The peak values of the streamwise velocity fluctuations $u^{+}_{rms}$ are observed to lie consistently in the buffer layer at $y^{+} \approx 14$, which agrees well with the value of  $y^{+} \approx 15$ reported by both \cite{Devenport2022} and \cite{Smits2021}. Following this peak value $u^{+}_{rms}$ decreases, with the rate of decrease slowing in the overlap region near $y^{+} \approx 100$. Here, the gradient $\partial u^{+}_{rms} / \partial y^{+}$ is observed to increase with $\Rey_\theta$, with data from the highest considered Reynolds number $\Rey_\theta = 8300$ exhibiting a plateau consistent with the experimental observations of \cite{Devenport2022} for spatially evolving boundary layers. 

Regarding peak values of the rms velocities, \cite{Smits2021} reported that the peak value of the squared rms streamwise velocity, denoted $u_{rms}^{2+}$, was observed to lie on the line $3.54 + 0.646 \ln(Re_{\tau})$ for friction Reynolds numbers in the range $6123 \leq Re_{\tau} \leq 19680$. Although our studies are at lower Reynolds numbers, we observe a similar relationship of $3.88 + 0.56 \ln(Re_{\tau})$. This compares well to analogous fits to the combined data of \cite{Schlatter2010,Eitel-Amor2014} and to the data of \cite{Sillero2013}, revealing relationships of $3.96 + 0.58 \ln(Re_{\tau})$ and $3.15 + 0.71 \ln(Re_{\tau})$, respectively. We note that for this analysis, the data of \cite{Schlatter2010} and \cite{Eitel-Amor2014}, which were produced by the same group and the same code, were combined to achieve a wider range of Reynolds numbers. Table \ref{tab:alpha_beta} also shows that the location of the peak values of $v_{rms}^{2+}$ and $w_{rms}^{2+}$ also lie on the lines of the form $\alpha \ln{(\Rey_\tau)} + \beta$ with a good degree of confidence, and that there is good agreement between the present results and data from the considered reference studies.



\begin{table}
\centering
\vskip 0.5em
\begin{tabular}{p{4cm}ccccccccc}
   & \multicolumn{3}{c}{$u^{2+}_{rms}$} & \multicolumn{3}{c}{$v^{2+}_{rms}$} &  \multicolumn{3}{c}{$w^{2+}_{rms}$} \\
\cmidrule(lr){2-4}\cmidrule(lr){5-7}\cmidrule(lr){8-10}
 & $\alpha$ & $\beta$   & $R^2$  &   $\alpha$ & $\beta$   & $R^2$  &   $\alpha$ & $\beta$   & $R^2$  \\
  \hline
  Present & $0.56$ & $3.88$ & $0.973$& $0.1$ & $0.57$ & $0.875$& $0.4$ & $-0.51$ & $0.982$ \\
  \hline
\cite{Schlatter2010}  and \cite{Eitel-Amor2014} & $0.58$ & $3.96$ & $0.996$& $0.1$ & $0.6$ & $0.918$& $0.37$ & $-0.26$ & $0.996$ \\
  \hline
  \cite{Sillero2013} & $0.71$ & $3.15$ & $0.994$& $0.12$ & $0.48$ & $0.973$ & $0.38$ & $-0.3$ & $0.999$\\
  \hline
\end{tabular}
\caption{ Linear regression coefficients $\alpha,\beta$ and correlation statistics $R^2$ for fits of the peak value of the squared rms velocity, $u^{2+}_{rms},v^{2+}_{rms}$ and $w^{2+}_{rms}$ to the line $\alpha \ln{(\Rey_\tau)} + \beta$. Fits are reported to data from the present study; the combined data of \cite{Schlatter2010,Eitel-Amor2014}; and the data of \cite{Sillero2013}.}
\label{tab:alpha_beta}
\end{table}


Regarding differences between periodic boundary layers and the spatially evolving reference data, the most prominent are in the case $\Rey_\theta =1000$ where differences are apparent in all three rms velocity profiles, as well as in the Reynolds shear stress and rms pressure profiles shown in figure \ref{fig:VII}. However, it can be seen in figures \ref{fig:VI} and \ref{fig:VII} that these differences decrease significantly with $\Rey_\theta$, a finding that is consistent with the analysis of \cite{Kozul2016} for temporal, periodic, boundary layers. 
A slight deviation is observed between the  $u^{+}_{rms}$ profile obtained in the present study and that of \cite{Eitel-Amor2014} for $\Rey_\theta=8300$. A possible cause of this discrepancy is that in this simulation of a spatially-evolving boundary layer, the spatial location for the statistics for $\Rey_\theta =8300$ is very close to the outlet of the computational domain, with \cite{Eitel-Amor2014} only technically reporting results on $u^{+}$ up to $\Rey_\theta=7500$. At such a location, the outlet boundary conditions can cause spurious, nonphysical behaviour. Thus, while we believe it is useful to consider the data from \cite{Eitel-Amor2014} at $\Rey_\theta =8300$ for comparison, they should be treated with appropriate caution. It was reported in  \cite{Fernholz1996} that the peak location of the Reynolds shear stress is proportional to $\sqrt{Re_\tau}$. Our observations support this, revealing a strong correlation ($R^2=0.984$) of this peak value with the line $2.24 \sqrt{Re_{\tau}}+0.97$. A similar analysis using the combined data from \cite{Schlatter2010,Eitel-Amor2014} and \cite{Sillero2013}, yields fits of $2.48\sqrt{Re_{\tau}}-0.1$ ($R^2 = 0.847$) and of $2.99 \sqrt{Re_{\tau}}+19.35$ ($R^2 = 0.951$), respectively, and show proportionality constants in good agreement with the data using the present method.  Finally, we note that Figure \ref{fig:VII} shows discrepancies in $p_{rms}^+$ near the wall between the present method and the reference data, which can be attributed to the coarser near-wall grid resolution employed in this study (see Table \ref{tab:III}).

\begin{figure}
\centering
\includegraphics[width=1\linewidth, trim={0.cm 0cm 0.cm 0.0cm},clip]{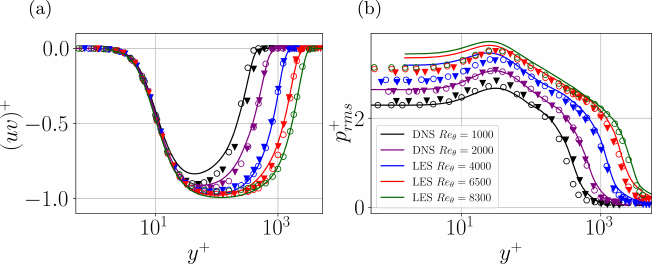}
\caption{(a) Mean Reynolds shear stress; and (b) rms pressure profiles. Reference data are shown with markers: ($\ocircle$) for both the DNS of \cite{Schlatter2010} and the LES of  \cite{Eitel-Amor2014}; ($\blacktriangledown$) for the DNS of \cite{Sillero2013}. Results with the present method are shown with solid lines.}
\label{fig:VII}
\end{figure}



Figure \ref{fig:TKE} shows the mean turbulent kinetic energy budget for $\Rey_\theta=1000,~6500$. It can be seen that the largest difference between periodic and spatially evolving boundary layers are obtained for the lowest Reynolds number. In particular, the production, dissipation and viscous diffusion components of the budget have consistently lower magnitudes in the periodic case when compared to the spatially evolving case. For the highest Reynolds number in the figure the present energy budget is in excellent agreement with the reference data.

\begin{figure}
\centering
\includegraphics[width=1\linewidth, trim={0.cm 0cm 0.cm 0.0cm},clip]{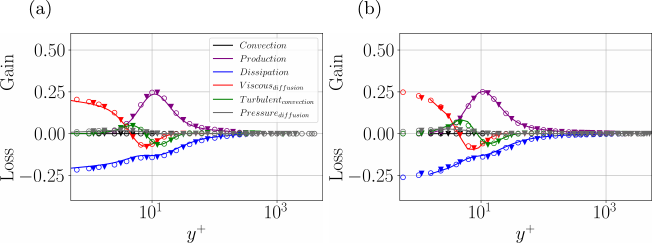}
\caption{Turbulent Kinetic Energy (TKE) budgets for (a) $\Rey_\theta=1000$ and (b) $\Rey_\theta = 6500$. Reference data are shown with markers: ($\ocircle$) for both the DNS of \cite{Schlatter2010} and the LES of  \cite{Eitel-Amor2014}'s; ($\blacktriangledown$) for the DNS of \cite{Sillero2013}. Results with the present method are shown with solid lines. }
\label{fig:TKE}
\end{figure}

\subsubsection{Vorticity profiles and instantaneous visualisations}

Figure \ref{fig:XI} shows the profiles of the rms vorticity components. Overall, there is very good agreement between the present method and the reference data, in terms of the locations of minimum and maximum rms vorticity, as well as the general profile shapes. 
%
%
The peak location of $\omega_{y_{rms}}$ is in the buffer layer at  $y^{+} \approx 13$, which is consistent with the results of \cite{Schlatter2010,Eitel-Amor2014} and \cite{Sillero2013}.

\begin{figure}
\centering
\includegraphics[width=1\linewidth, trim={0.cm 0cm 0.cm 0.0cm},clip]{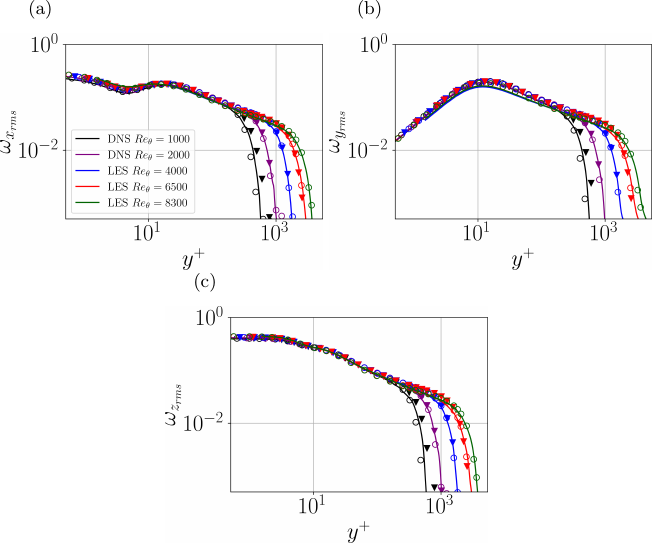}
  \caption{Profiles of the rms vorticity components (a) $\omega_{x_{rms}}$; (b) $\omega_{y_{rms}}$; and (c) $\omega_{z_{rms}}$, for $Re_\theta=1000$ to $8300$. Reference data are shown with markers: ($\ocircle$) for both the DNS of \cite{Schlatter2010} and the LES of  \cite{Eitel-Amor2014}'s; ($\blacktriangledown$) for the DNS of \cite{Sillero2013}. Results with the present method are shown with solid lines.}
  \label{fig:XI}
\end{figure}

Finally, we use the Q-criterion to visualise the vorticial structures generated in our periodic turbulent boundary layer flows. The $Q$-criterion can be used to measure the local balance of rotation and strain rate, being defined by $Q = \frac12 ( \| \Omega \|_F^2 - \|S\|_F^2 )$, where $S=\frac12 \left( \partial u_i / \partial x_j + \partial u_j / \partial x_i\right)$ is the stain rate tensor, $\Omega=\frac12 \left( \partial u_i / \partial x_j - \partial u_j / \partial x_i\right)$ is the vorticity tensor,  and $\| \cdot \|_F$ is the Frobinus norm. The Q-criterion is normalized using wall units as  $Q^{+} = Q(t^+)^{-2} = Q \nu^2 (U_{\tau})^{-4}$. The normalized Q-criterion plots, which show contours at the values of $Q^+ = \{0.01, 0.004\}$ at Reynolds numbers $\Rey_{\theta} = \{1000, 6500 \}$, respectively, are presented in Figure \ref{fig:Q_series}. 
As expected, a wider range of turbulent scales can be observed for the DNS than the ILES, with ﬁner vortices apparent as the Reynolds number is increased. It should be noted that the ILES does not produce spurious oscillations, suggesting that the numerical dissipation is acting properly. The shape and structure of the vortices (mainly elongated structures in the streamwise direction, slightly inclined up with respect to the wall) is similar to the ones observed in spatially evolving turbulent boundary layers. It is clear that the periodic boundary layer model considered in this paper appears to capture the expected features of canonical turbulent boundary layers across a range of Reynolds numbers.

\begin{figure}
\centering
\includegraphics[width=1\linewidth, trim={0.cm 0cm 0.cm 0.0cm},clip]{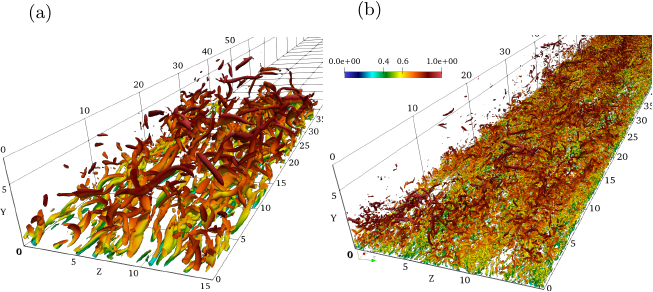}
\caption{Contours of constant $Q^+$ for selected snapshots of a periodic turbulent boundary layer at (a) $\Rey_\theta =1000, Q^+ = 0.01$;  (b)  $\Rey_\theta =6500, Q^+ = 0.004$. The colourbar indicates non-dimensional streamwise velocity $u$. Each subplot shows a section of the respective spatial domains described in Table \ref{tab:II}.}
\label{fig:Q_series}
\end{figure}

\section{Conclusions} \label{sec:conclusion}

In this paper we have performed a rigorous analytical study of the periodic boundary layer equations proposed by \cite{Biau2023}, in which a body force is used to maintain the boundary layer thickness of the flow. It is shown that an explicit formula can be obtained for the amplitude of this body force as a function of the flow velocity field. This enabled an explicit PDE to be identified for the flow, as opposed to the implicit definition given by \cite{Biau2023}. 

The explicit form of the PDE was important for two reasons. First, it allowed an application of the `background field' method of and Constantin-Doering, and we proved that the skin friction coefficient of the periodic boundary layer flow was upper bounded by an absolute constant. Future work will investigate wither this constant can be reduced by implementing computationally an optimal version of the proof presented in this paper. 

The second implication of our explicit formula for the forcing amplitude was that it allowed the construction of a simple numerical scheme for simulating turbulent boundary layers on periodic domains which can be used with both DNS and ILES approaches. Validation of this scheme was presented, with results shown to closely match those from \cite{Biau2023}, but also with data from spatially evolving turbulent boundary layers up to $\Rey_\theta = 8300$. It was observed that the similarity between the two classes of flow was greater at higher Reynolds number. This is important, since  periodic turbulent boundary layer simulations can be performed at a lower computational cost than simulations of spatially evolving boundary layers. For example, the ILES  of a periodic boundary layer flow at $\Rey_\theta=8300$ performed in this study is nearly 300 times cheaper than the DNS of \cite{Sillero2013}, and about 6 times cheaper than the LES of \cite{Eitel-Amor2014} for spatially evolving boundary layers at comparable Reynolds numbers.




In conclusion, this paper presented a detailed theoretical and numerical study of turbulent boundary layer flows and investigated their statistical similarity with canonical spatially evolving boundary layers.  Further investigations at higher Reynolds numbers and exploration of more complex flow scenarios are warranted to fully explore  our findings.

\section{Acknowledgements}
This work was funded via the EPSRC project EP/T021144/1. The simulations were performed on the ARCHER2 UK National Supercomputing Service via a Pioneer project and the UK Turbulence Consortium (EP/R029326/1). Saeed Parvar would like to thank Ms. Felicity Buchan, Member of Parliament of the United Kingdom, for her support and encouragement.

\section{Appendix} \label{sec:appendix}
In this section we provide detailed proofs of Lemma \ref{lem:displacement_control}, Lemma \ref{thm:momentum_control} and the energy equation \eqref{eq:displacement_control_E}.

\subsection*{Proof of \eqref{eq:displacement_evolution} and Lemma \ref{lem:displacement_control}} 

\begin{proof}
By taking the streamwise and spanwise average of \eqref{eq:NSE_d1}, it can be shown that 
\begin{equation} \label{eq:U_pde}
\frac{\partial U}{\partial t} + \frac{\partial}{\partial y} \langle u v\rangle = \frac{1}{\Rey} \frac{\partial^2 U}{\partial y^2} + f(t) y \frac{\partial U}{\partial y}.
\end{equation}
Using the boundary conditions, it follows that
\[
\frac{d \delta^\ast}{dt} = - \int_0^\infty \frac{\partial U}{\partial t} dy = \frac{1}{\Rey} \frac{\partial U}{\partial y}(0,t) - f(t) \int_0^\infty y  \frac{\partial U}{\partial y} dy.
\]
We next apply integration by parts to the final term and use the assumption that $y(1-u(y))\rightarrow 0$ as $y \rightarrow \infty$ to obtain
\begin{equation} \label{eq:dUdy_formula}
\int_0^\infty y  \frac{\partial U}{\partial y} dy = \int_0^\infty y  \frac{\partial (U-1)}{\partial y} dy = \left[ y(U-1)\right]_{y=0}^\infty  + \int_0^\infty (1- U) dy = \delta^\ast
\end{equation}
Hence, 
\begin{equation} \label{eq:delta_ode}
\frac{d \delta^\ast}{dt} = -f(t)\delta^\ast(t) + \frac{1}{\Rey} \frac{\partial U}{\partial y}(0,t).
\end{equation}
Letting $e_\delta(t) = 1- \delta^\ast(t)$ and using the control law 
\[
f(t) = \frac{-k e_\delta(t) + \frac{1}{\Rey}\frac{\partial U}{\partial y}}{\delta^\ast(t)},
\]
it  follows that $\dot{e}_\delta = -ke_\delta$. Hence, $e_\delta(t) \rightarrow 0$ as $t \rightarrow \infty$. 
\end{proof}

\subsection*{Proof of \eqref{eq:momentum_ode} and Lemma \ref{thm:momentum_control}.} 

\begin{proof}
Using \eqref{eq:U_pde}, 
\begin{align}
   \frac{d \theta}{dt} &= \int_0^\infty \frac{\partial U}{\partial t} - 2 U  \frac{\partial U}{\partial t} dy \nonumber \\
   &=- \frac{d\delta^\ast}{dt} + 2\int_0^\infty U \frac{\partial \langle uv \rangle}{\partial y} -\frac{2}{\Rey} \int_0^\infty U \frac{\partial^2 U}{\partial y^2} dy - 2 f(t) \int_0^\infty yU \frac{\partial U}{\partial y} dy \nonumber \\
   &= f(t) \delta^\ast(t) - \frac{1}{\Rey} \frac{\partial U}{\partial y}(0,t) \nonumber \\
   &\quad - 2 \int_0^\infty \frac{\partial U}{\partial y} \langle uv \rangle dy + \frac{2}{\Rey} \int_0^\infty \left( \frac{\partial U}{\partial y} \right)^2 dy - 2 f(t) \int_0^\infty yU \frac{\partial U}{\partial y} dy, \label{eq:dtheta_dt}
\end{align}
where in the final line, we have used integration by parts,  the boundary conditions \eqref{eq:bcs}, and the identity \eqref{eq:delta_ode}. To evaluate the final term, we again use integration by parts and the assumption that $y(1-U(y)) \rightarrow 0$ as $y \rightarrow \infty$ to obtain
\begin{align}
  \int_0^\infty yU \frac{\partial U}{\partial y} dy  &=   \int_0^\infty yU \frac{\partial( U-1)}{\partial y} dy \nonumber\\
  &=  \left[ yU(U-1) \right]_{y=0}^{y=\infty} - \int_0^\infty U(U-1) dy - \int_0^\infty y(U-1) \frac{\partial U}{\partial y} dy \nonumber \\
\text{(by \eqref{eq:dUdy_formula})}  &=\theta + \delta -  \int_0^\infty yU \frac{\partial U}{\partial y} dy. \label{eq:UdUdy_formula}
\end{align}
Combining \eqref{eq:dtheta_dt} and \eqref{eq:UdUdy_formula} gives
\[
\frac{d \theta}{dt}  = - f(t)\theta(t) - \frac{1}{\Rey} \frac{\partial U}{\partial y}(0,t)  - 2 \int_0^\infty \frac{\partial U}{\partial y} \langle uv \rangle dy + \frac{2}{\Rey} \int_0^\infty \left( \frac{\partial U}{\partial y} \right)^2 dy.
\]
Hence, if $e_\theta(t) = 1-\theta(t)$ and $f(t)$ is given by \eqref{eq:momentum_control}, it follows that 
\[
\frac{de_{\theta}}{dt} = -ke_\theta(t)
\]
and, hence, $e_\theta(t) \rightarrow 0$ as $t \rightarrow \infty$. 
\end{proof}

\subsection*{Proof of the energy identity \eqref{eq:displacement_control_E}}

\begin{proof}
    After taking the dot product of \eqref{eq:NSE_d1} with $\bs{u}$, integrating over $\Omega$, and using incompressibility gives
\begin{equation} \label{eq:dudt_1}
\frac12 \frac{d}{dt} \|\bs{u}\|^2 + \frac{1}{\Rey} \|\nabla \bs{u}\|^2  = \frac12 C_f(t) \int_\Omega y \bs{u} \cdot \frac{\partial \bs{u}}{\partial y} d\bs{x}. 
\end{equation}
Now consider the final term in the above equation. An initial integration by parts gives 
\begin{equation} \label{eq:dudt_2}
\int_\Omega y \bs{u} \cdot \frac{\partial \bs{u}}{\partial y} d\bs{x} = - \frac12 (\|v\|^2 + \|w\|^2) + \int_\Omega y u  \frac{\partial u}{\partial y} d\bs{x}. 
\end{equation}
Now, \eqref{eq:dUdy_formula} implies that $\delta^\ast = \int_0^\infty y \frac{\partial U}{\partial y} dy = \int_\Omega y\frac{\partial u}{\partial y} \, d\bs{x}$. It then follows that 
\begin{align}
 \int_\Omega y u  \frac{\partial u}{\partial y} d\bs{x} &= \delta^\ast + \int_\Omega y(u-1) \frac{\partial u}{\partial y} \, d\bs{x} \nonumber \\
 &= \delta^\ast + \int_\Omega y(u-1) \frac{\partial (u-1)}{\partial y} \, d\bs{x}\nonumber \\
 \text{(by \eqref{eq:decay})} &=\delta^\ast - \|u-1\|^2 - \int_\Omega y(u-1) \frac{\partial u}{\partial y} \, d\bs{x} \nonumber \\
 & = 2\delta^\ast - \|u-1\|^2 -  \int_\Omega y u  \frac{\partial u}{\partial y} d\bs{x}. \label{eq:yududy_formula}
\end{align}
Finally, combining \eqref{eq:dudt_1}, \eqref{eq:dudt_2} and \eqref{eq:yududy_formula} and using that $\delta^\ast =1$ gives the energy equation 
\[
\frac12 \frac{d}{dt} \|\bs{u}\|^2 + \frac{1}{\Rey} \|\nabla \bs{u}\|^2  = \frac12 C_f(t) (1 - \frac12 \|\bs{u}-\bs{e}_x \|^2). 
\]

\end{proof}

\bibliographystyle{amsplain}
\bibliography{TBL_theory_v.2}

\end{document}